\def\undertilde#1{\mathord{\vtop{\ialign{##\crcr
$\hfil\displaystyle{#1}\hfil$\crcr\noalign{\kern1.5pt\nointerlineskip}
$\hfil\tilde{}\hfil$\crcr\noalign{\kern1.5pt}}}}}
\newcommand{\ub}[1]{\underbar{$#1$}}
\newcommand{\ol}[1]{\overline{#1}}
\newcommand{\ul}[1]{\underline{#1}}
\newtheorem{definition}{Definition}
\newtheorem{thm}{Theorem}
\newtheorem{lemma}{Lemma}
\newtheorem{assum}{Assumption}
\newtheorem{corollary}{Corollary}
\theoremstyle{remark}
\newtheorem{rem}{Remark}
\renewcommand \Box      {\square}
\newcommand \Diam       {\lozenge}
\newcommand \Next       {\bigcirc}
\newcommand \Not        {\mathopen{\neg}}
\renewcommand \And      {\mathbin{\wedge}}
\newcommand \Or         {\mathbin{\vee}}
\renewcommand \Until      {\mathbin{\text{\sffamily U}\kern-.1em}}
\newcommand \Impl       {\mathbin{\rightarrow}}
\newcommand{\dens}{x}
\newcommand{\denscap}{\dens^\text{cap}}
\newcommand{\Verts}{\mathcal{V}}
\newcommand{\Links}{\mathcal{L}}
\newcommand{\Lin}{\mathcal{L}^\text{in}}
\newcommand{\Lout}{\mathcal{L}^\text{out}}
\newcommand{\flowin}{f^\text{in}}
\newcommand{\flowout}{f^\text{out}}
\newcommand{\head}{{\eta}}
\newcommand{\tail}{{\tau}}
\newcommand{\Domain}{\mathcal{X}}
\newcommand{\ltlphi}{\varphi}
\newcommand{\Dist}{\mathcal{D}}
\newcommand{\Post}{{\tt Post}}
\newcommand{\oPost}{{\overline{\Post}}}
\newcommand{\ux}{\underline{x}}
\newcommand{\barx}{\overline{x}}
\newcommand{\bx}{\boldsymbol{x}}
\newcommand{\by}{\boldsymbol{y}}
\newcommand{\bs}{\boldsymbol{s}}
\newcommand{\bd}{\boldsymbol{d}}
\newcommand{\bxi}{\boldsymbol{\xi}}
\newcommand{\bsigma}{\boldsymbol{\sigma}}
\newcommand{\xloc}{\bx^\text{loc}}
\newcommand{\sloc}{\bs^\text{loc}}
\newcommand{\xdown}{\bx^\text{down}}
\newcommand{\Ldown}{\Links^\text{down}}
\newcommand{\Lloc}{\Links^\text{loc}}
\newcommand{\Lup}{\Links^\text{up}}
\newcommand{\Ladj}{\Links^\text{adj}}
\newcommand{\sig}{s}
\newcommand{\bsig}{\boldsymbol{\sig}}
\newcommand{\Sig}{\mathcal{S}}
\newcommand{\sigloc}{\bsig^\text{loc}}
\newcommand{\T}{\mathcal{T}}
\newcommand{\Q}{\mathcal{Q}}
\newcommand{\QQ}{\mathbb{Q}}
\newcommand{\Z}{\mathcal{Z}}
\newcommand{\W}{\mathcal{W}}
\newcommand{\I}{\mathcal{I}}
\newcommand{\cl}{\mathbf{cl}}
\newcommand{\Taug}{\T_\text{aug}}
\newcommand{\toaug}{\to_\text{aug}}
\algrenewcommand\algorithmicindent{1.0em}
\newcommand{\pushcode}{\hspace{6em}\relax}
\newcommand{\pushcodeb}{\hspace{2.25em}\relax}
\definecolor{CornflowerBlue}{rgb}{0.258824,0.258824,0.435294}
\definecolor{cfblue}{rgb}{0.258824,0.258824,0.435294}
\definecolor{SkyBlue}{rgb}{0.196078,0.6,0.8}
\definecolor{dblue}{rgb}{.098,.243,.424}
\definecolor{lblue}{rgb}{.33,.57,.835}
\definecolor{llblue}{rgb}{.447,.643,.831}
\definecolor{lbluesam}{rgb}{.447,.643,.831} %
\definecolor{mblue}{rgb}{0.176, 0.380, 0.659}
\definecolor{lcomp}{rgb}{.969,.765,.416}
\definecolor{ddorange}{rgb}{0.624, 0.365, 0}
\definecolor{dorange}{rgb}{0.72, 0.506, 0.125}
\definecolor{lorange}{rgb}{0.961, 0.678, 0.165}
\definecolor{lgreen}{rgb}{.812,.969,.435}
\definecolor{dgreen}{RGB}{15,111,3}
\definecolor{lyellow}{rgb}{1,.859,.451}
\definecolor{dyellow}{rgb}{.651,.482,0}
\definecolor{lred}{rgb}{1,.6,.451}
\definecolor{dred}{rgb}{.65,.176,0}
\definecolor{dcompb}{RGB}{157,35,0}  %
\definecolor{lcompb}{RGB}{186,70,30}
\definecolor{llcompb}{RGB}{255,136,92}
\definecolor{lcompbsam}{RGB}{255,136,92}  %
\definecolor{dpurple}{RGB}{45,0,95}
\definecolor{mpurple}{RGB}{77,0,159}
\definecolor{lpurple}{RGB}{143,73,206}
\definecolor{purplea}{RGB}{122,24,207}
        \pgfextractx{\pgf@xa}{\southwest}%
        \pgfextracty{\pgf@ya}{\southwest}%
        \pgfextractx{\pgf@xb}{\northeast}%
        \pgfextracty{\pgf@yb}{\northeast}%
            \def\pgf@diagonal@point@a{\southwest}%
            \def\pgf@diagonal@point@b{\northeast}%
\newif\ifpgf@diagonal@lefttoright
\def\pgf@diagonal@top@color{white}
\def\pgf@diagonal@left@color{gray!30}
\def\pgfsetnscolor#1{\def\pgf@diagonal@top@color{#1}}%
\def\pgfsetewcolor#1{\def\pgf@diagonal@left@color{#1}}%
\DeclareMathSymbol{\lsb@l}{\mathalpha}{letters}{`l}
\title{Traffic Network Control from Temporal Logic Specifications}
\author{\thanks{This research was supported in part by the NSF under grants CNS-1446145 and CNS-1446151. Samuel Coogan and Murat Arcak are with the Department of Electrical Engineering and Computer Sciences, University of California, Berkeley, {\tt\{scoogan,arcak\}@eecs.berkeley.edu}. Ebru Aydin Gol is formerly with the Division of Systems Engineering, Boston University, {\tt  ebruaydin@gmail.com}. Calin Belta is with the Department of Mechanical Engineering, Boston University, { \tt cbelta@bu.edu} } Samuel Coogan, Ebru Aydin Gol, Murat Arcak, and Calin Belta }
\begin{document}
\maketitle

\begin{abstract}                          %
We propose a framework for generating a signal control policy for a traffic network of signalized intersections to accomplish control objectives expressible using linear temporal logic. By applying techniques from model checking and formal methods, we obtain a correct-by-construction controller that is guaranteed to satisfy complex specifications. To apply these tools, we identify and exploit structural properties particular to traffic networks that allow for efficient computation of a finite state abstraction. In particular, traffic networks exhibit a \emph{componentwise monotonicity} property which allows reach set computations that scale linearly with the dimension of the continuous state space. 
\end{abstract}

\section{Introduction}
\label{sec:introduction}

State-of-the-art approaches to coordinated control of signalized intersections often focus on limited objectives such as maximizing throughput \cite{Wongpiromsarn:2012yq} or maintaining stability of network queues \cite{Varaiya:2013xe, Varaiya:2013ty}; see \cite{Papageorgiou:2003ly} for a review of the literature. However, traffic networks are a natural domain for a much richer class of control objectives that are expressible using \emph{linear temporal logic (LTL)} \cite{clarke1999model, Baier:2008vn}. LTL \emph{formulae} allow control objectives such as ``actuate traffic flows such that throughput is always greater than $C_1$'' where $C_1$ is a threshold throughput, or such that ``traffic link queues are always less than $C_2$'' where $C_2$ is a threshold queue length. LTL formulae also allow more complex objectives such as ``infinitely often, the queue length on road $l$ should reach 0,'' ``anytime link $l$ becomes congested, it eventually becomes uncongested,'' or any combination of these conditions. As these examples suggest, many objectives that are difficult or impossible to address using standard control theoretic techniques are easily expressed in LTL.

In this paper, we propose a technique for synthesizing a signal control policy for a traffic network such that the network satisfies a given control objective expressed using LTL. The synthesized policy is a finite-memory, state feedback controller that is provably correct, that is, guaranteed to result in a closed loop system that satisfies the control objective.

Recent approaches to control synthesis from LTL specifications such as\cite{Tabuada:2006uq, Paulo:2008ys, Fainekos:2009rr, Kress-Gazit:2009mz, Kloetzer:2010ve, Abate:2011ul, Wongpiromsarn:2012oq, Yordanov:2012fk, Gol:2014nr, Julius:2012fk, Topcu:2012cs, Liu:2013gd, Aydin-Gol:2013kq,Plaku:2013vn, Coogan:2014pi} allow automatic development of correct-by-construction control laws; however, despite these promising developments, scalability concerns prevent direct application of existing results to large traffic networks.

To overcome these scalability limitations, we identify and exploit \emph{componentwise monotonicity} \cite{Kulenovic:2006kx} properties inherent in flow networks such as traffic networks. These properties allow efficient computation of bounds on the one-step reachable set from a rectangular box of initial conditions, which in turn allows efficient computation of a finite state abstraction of the dynamics, thereby mitigating a crucial bottleneck in the control synthesis process. A related approach to abstractions of monotone systems is suggested in \cite{Moor:2002fe}, however the componentwise monotonicity properties exploited in this work are much more general and encompass monotone systems as a special case. %
The present paper builds on our preliminary work in \cite{Coogan:2014bh} by defining componentwise monotonicity and identifying it as the enabling property for efficient abstraction. %

This paper is organized as follows: Section \ref{sec:preliminaries} gives necessary preliminaries. Section \ref{sec:sign-netw-traff} presents the model for signalized networks, and Section \ref{sec:ltl-spec-traff} establishes the problem formulation. Section \ref{sec:cw-monot-traff} identifies componentwise monotonicity properties of the traffic networks, and Section \ref{sec:finite-state-repr} presents scalable algorithms that rely on these properties to construct a finite state representation of the traffic network. Section \ref{sec:synth-contr-from} describes the controller synthesis approach, and discusses the computation requirements of our method. We present a case study in Section \ref{sec:case-study} and conclude our work in Section \ref{sec:conclusions}.

\section{Preliminaries}
\label{sec:preliminaries}

The set $\I\subseteq \mathbb{R}^n$ is a \emph{box} if it is the cartesian product of intervals, or equivalently, $\I$ is a box if there exists $x,y\in\mathbb{R}^n$ such that
  $\I=\prod_{i=1}^n\{z\in\mathbb{R}\mid x_i\prec_{i}^1 z \prec_{i}^2 y_i\}$
where $\prec_{i}^1,\prec_{i}^2\in\{<,\leq\}$ and $x_i, y_i$ denote the $i$th coordinate of $x$ and $y$, respectively. Defining $\prec^1\triangleq\{\prec^1_{i}\}_{i=1}^n$ and $\prec^2\triangleq\{\prec^2_{i}\}_{i=1}^n$, we may write $\I=\{z\in\mathbb{R}^n\mid x\prec^1 z\prec^2 y\}$. The vector $x$ is the \emph{lower corner} of $\I$, and likewise $y$ is the \emph{upper corner}.%

When applied to vectors, $<$, $\leq$, $>$, and $\geq$ are interpreted elementwise. The notation $\mathbf{0}$ denotes the all-zeros vector where the dimension is clear from context.
We denote closure of a set $Y$ by $\cl(Y)$. Given an index set $\Links$ and a set of values $x_l\in\mathbb{R}$ for $l\in\Links$, $\{x_l\}_{l\in\Links}$ denotes the collection of $x_l$, $l\in\Links$, but we also interpret $x=\{x_l\}_{l\in\Links}$ as an element of $\mathbb{R}^{|\Links|}$.

A \emph{transition system} is a tuple $\T=(Q,S,\to)$ where
$\Q$ is a finite set of states,
$\Sig$ is a finite set of actions, and
$\to\subset \Q\times \Sig\times \Q$ is a transition relation.
 We write $q\overset{s}{\to} q'$ instead of $(q,s,q')\in\to$. Note that all transition systems in this paper are \emph{finite} \cite{Baier:2008vn}. The evolution of a transition system is described by $\to$. That is, a transition system is initialized in some state $q_0\in\Q$, and, given an action $s\in\Sig$, the next state of the transition system is chosen nondeterministically from $\{q'\mid q\overset{s}{\to}q'\}$.

\section{Signalized Network Traffic Model}
\label{sec:sign-netw-traff}

\begin{figure}
\centering
  \begin{tikzpicture}
    [subsystem/.style={quad with diagonal fill, ns color=red, ew color=green,draw,inner sep=1pt,minimum size=3mm},
>=to,line width=1pt, >=latex]
\node (Aleft) at (-1.5,0) {};
\node[subsystem] (A) at (0,0) {};
\node[subsystem] (B) at (1.5,0) {};
\node[subsystem] (C) at (0,1) {};
\node[subsystem] (D) at (1.5,1) {};
\node[subsystem] (E) at (0,-1) {};
\node[subsystem] (F) at (1.5,-1) {};
\node[subsystem] (G) at (3,0) {};
\draw[->] (Aleft) -- node[above]{$1$} (A);
\draw[->] (A) --node[above]{$l$} (B);
\draw[->] (B) -- node[above]{$10$}(G);
\draw[->] (A.70) -- node[right]{$3$}(C.290);
\draw[<-] (A.110) -- node[left]{$2$} (C.250);
\draw[->] (B.70) -- node[right]{$7$} (D.290);
\draw[<-] (B.110) -- node[left]{$6$} (D.250);
\draw[->] (E.70) -- node[right]{$5$}(A.290);
\draw[<-] (E.110) -- node[left]{$4$}(A.250);
\draw[->] (F.70) -- node[right]{$9$} (B.290);
\draw[<-] (F.110) -- node[left]{$8$} (B.250);
\end{tikzpicture}
  \caption{A typical traffic network with 11 links and 7 signalized intersections. In the figure, $\Ldown_l=\{l,7,8,10\}$, $\Lup_l=\{1,2,5\}$, and $\Ladj_l=\{3,4\}$. At each time step, a signal actuates a subset of upstream links.}
  \label{fig:sig}
\end{figure}

A signalized traffic network consists of a set $\Links$ of \emph{links} and a set $\Verts$ of \emph{signalized intersections}. For $l\in\Links$, let $\head(l)\in\Verts$ denote the downstream intersection of link $l$ and let $\tail(l)\in\Verts\cup \emptyset$ denote the upstream intersection of link $l$. A link $l$ with $\tail(l)=\emptyset$ serves as an entry-point into the network, and we assume $\head(l)\neq \tail(l)$ for all $l\in\Links$ (\emph{i.e.}, no self-loops). Link $k\neq l$ is \emph{upstream} of link $l$ if $\head(k)=\tail(l)$, \emph{downstream} of link $l$ if $\tail(k)=\head(l)$, and \emph{adjacent to} link $l$ if $\tail(k)=\tail(l)$.  Roads exiting the traffic network are not modeled explicitly. For each $v\in\Verts$, define
$  \Lin_v=\{l\mid\head(l)=v\}$, $\Lout_v=\{l\mid\tail(l)=v\}$
 and for each $l\in\Links$, define     %
 \begin{align}
   \label{eq:8}
     \Lup_l&=\{k\in\Links\mid \head(k)=\tail(l)\}\\
     \Ldown_l&=\{k\in\Links\mid \tail(k)=\head(l)\}\cup \{l\} \\
  \Ladj_l&=\{k\in\Links\mid \tail(k)=\tail(l)\}\backslash \{l\}
 \end{align}
so that $\Ldown_l$ includes link $l$ and the links downstream of link $l$, and $\Lup_l$ and $\Ladj_l$ are the links upstream and adjacent to $l$, respectively, see Fig. \ref{fig:sig}.
 We have $\Ldown_l\cap\Ladj_l=\emptyset$ and $\Lup_l\cap\Ladj_l=\emptyset$, but note that it is possible for $\Ldown_l\cap\Lup_l\neq \emptyset$, in particular,  if there is a cycle of length two in the network. Let $\Lloc_l=\Ldown_l\cup \Lup_l\cup \Ladj_l$ be links ``local'' to link $l$.

 Each link $l\in \Links$ possesses a queue 
$\dens_l[t]\in[0,\denscap_l]$
representing the number of vehicles on link $l$ at time step $t\in\mathbb{N}\triangleq \{0,1,2,\ldots\}$ where $\denscap_l$ is the capacity of link $l$. We allow $\dens_l$ to be a continuous quantity, thus adopting a fluid-like model of traffic flow evolving in slotted time as in \cite{Wongpiromsarn:2012yq, Varaiya:2013ty, Varaiya:2013xe}.

Movement of vehicles among link queues is governed by mass-conservation laws and the state of the signalized intersections. A link is said to be \emph{actuated} if outgoing flow from link $l$ is allowed as determined by the state of the traffic signal at intersection $\head(l)$. At each intersection $v$,
\begin{align}
  \label{eq:1}
  \Sig_v\subseteq 2^{\Lin_v}
\end{align}
denotes the set of available signal \emph{phases}, that is, each $\sig_v\in \Sig_v$, $\sig_v\subseteq {\Lin_v}$ denotes a set of incoming links at intersection $v$ that may be actuated simultaneously. We define 
\begin{align}
  \label{eq:9}
  \Sig=\{\cup_{v\in\Verts}s_v\mid s_v\in\Sig_v\ \forall v\in\Verts\}\subseteq 2^\Links
\end{align}
so that each $\bsig\in\Sig$, $\bsig\subseteq \Links$ denotes a set of links in the network that may be actuated simultaneously. We identify $\bsig\in\Sig$ with its constituent phases so that $\bsig=\{s_v\}_{v\in\Verts}$, and we interpret $\Sig$ as the set of allowed inputs to the traffic network. 

When a link is actuated, a maximum of $c_l$ vehicles are allowed to flow from link $l$ to links $\Lout_{\head(l)}$ per time step where $c_l$ is the known \emph{saturation flow} for link $l$, \cite{Papageorgiou:2003ly}. The \emph{turn ratio} $\beta_{lk}$ denotes the fraction of vehicles exiting link $l$ that are routed to link $k$, \cite{Varaiya:2013xe}. Then $\beta_{lk}\neq 0$ only if $\head(l)=\tail(k)$, and
\begin{align}
  \label{eq:3}
  \sum_{k\in\Lout_{\head(l)}}\beta_{lk}\leq 1.
\end{align}
Strict inequality in \eqref{eq:3} implies that a fraction of vehicles on link $l$ are routed off the network via unmodeled roads that exit the network. Traffic flow can occur only if there is available capacity downstream. To this end, the supply ratio $\alpha^{\sig_v}_{lk}$ denotes the fraction of link $k$'s capacity available to link $l$ during phase $\sig_v\in\Sig_{\tail(k)}$. That is, link $l$ may only send $\alpha^{\sig_v}_{lk}(\denscap_k-\dens_k[t])$ vehicles to link $k$ in time period $t$ under input $\sig_v$. As the supply is only divided among actuated incoming links, it follows that for each $k\in\Links$
\begin{align}
  \label{eq:4}
\sum_{l\in\sig_{v}}\alpha^{\sig_v}_{lk}=1\quad \forall \sig_{v}\in\Sig_{\tail(k)}, \sig_v\neq \emptyset.
\end{align}
Constant turn and supply ratios are a common modeling assumption justified by empirical observations; see \cite{Lebacque:2005fk} for further discussion. 

We are now in a position to define the dynamics of the link queues. As we will see subsequently, the flow of vehicles out of link $l$ is only a function of the state of links in $\Ldown_l$, and the update of link $l$'s state is only a function of links in $\Lloc_l$.

Let $\bx[t]=\{x_l[t]\}_{l\in\Links}$, $\xdown_l[t]=\{x_k[t]\}_{k\in\Ldown_l}$, and $\xloc_l[t]=\{x_k[t]\}_{k\in\Lloc_l}$. {The} outflow of link $l\in\Links$ {is }as follows:
\begin{align}
\notag &\flowout_l(\xdown_l,\sig_{\head(l)})=\\
  \label{eq:5}&\begin{cases}
\min\bigg\{x_l[t],c_l,\min_{\substack{k \text{ s.t.}\\\beta_{lk}\neq0}}\Big\{\frac{\alpha^{\sig_{\head(l)}}_{lk}}{\beta_{lk}}(\denscap_k-\dens_k[t])\Big\}\bigg\}&\\
\hspace{2.15in}\text{if $l\in\sig_{\head(l)}$}&\\
0 \hspace{2.4in}\text{else}.
\end{cases}
\end{align}
{The interpretation of \eqref{eq:5} is that the flow of vehicles exiting a link $l$ when actuated is the minimum of the link's queue length, its saturation flow, and the downstream \emph{supply} of capacity, weighted appropriately by turn and supply ratios. This modeling approach is based on the \emph{cell transmission model} of traffic flow \cite{Daganzo:1994fk} which restricts flow if there is inadequate capacity downstream. A consequence of \eqref{eq:5} is that inadequate capacity on one downstream link at an intersection causes congestion that blocks incoming flow to other downstream links. %
This phenomenon, sometimes called the \emph{first-in-first-out} property, has been widely studied in the transportation literature and occurs even in multilane settings \cite{Munoz:2002qv}\footnote{{Even if a turn pocket exists at an intersection, it is often too short to fully mitigate this blocking property. Nonetheless, if the road geometry is such that a sufficient number of dedicated lanes exist for a turning movement, these lanes may be modeled with a separate link.}}.}
The number of vehicles in each link's queue then evolves according to the mass conservation equation
\begin{align}
  \label{eq:7}
  x_l[t+1]=&F_l(\xloc_l[t],\sigloc_l[t],d_l[t])\\
\notag\triangleq &\min\Big\{\denscap_l ,x_l[t]-\flowout_l(\xdown_l[t],\sig_{\head(l)})\\
\label{eq:7-2}&\ \ +\sum_{j\in\Lup_l}\beta_{jl}\flowout_j(\xdown_j[t],\sig_{\head(j)})+d_l[t]\Big\}
\end{align}
where $d_l[t]$ is the number of vehicles that exogenously enters the queue on link $l$ in time step $t$, $\bd=\{d_l[t]\}_{l\in\Links}$, and $\sigloc_l=\{\sig_{\head(l)},\sig_{\tail(l)}\}$ if $\tail(l)\neq \emptyset$, $\sigloc_l=\{\sig_{\head(l)}\}$ otherwise; that is,  $\sigloc_l$ is the state of the signals that are ``local'' to link $l$. The minimization in \eqref{eq:7-2} is only needed in case the exogenous input $d_l[t]$ would cause the state of link $l$ to exceed $\denscap_l$ and ensures that the network dynamics maps
\begin{align}
  \label{eq:15}
  \Domain=\prod_{l\in\Links}[0,\denscap_l]
\end{align}
to itself. We interpet this as refusal of vehicles attempting to exogenously enter the network when the link is full. Note in particular that the supply/demand formulation prevents upstream inflow from exceeding supply and thus for links with no exogenous input, $\denscap_l$ is never the unique minimizer in \eqref{eq:7-2}.

\begin{rem}
  An alternative to the above approach is to define an auxiliary sink state $\texttt{Out}$ in the transition systems $\T$ defined in Section \ref{sec:finite-state-repr} which captures any trajectories that exit the domain $\Domain$. The temporal logic specification can then incorporate the requirement that the system never enters this $\texttt{Out}$ state.  %
\end{rem}

\begin{assum}
 We assume there exists $\mathcal{D}\subset \mathbb{R}^{\Links}$ such that
\begin{align}
  \label{eq:12}
\bd[t]\in\mathcal{D}  \quad \forall t
\end{align} 
and $\Dist$ satisfies $\Dist\subset\cup_{i=1}^{n_\Dist}\Dist_i$ where each $\Dist_i$ is given by
\begin{align}
  \label{eq:27}
  \Dist_i=\{\bd\mid \ul{\bd}^i\leq \bd \leq \ol{\bd}^i\}%
\end{align}
for some $\ul{\bd}^i=\{\ul{d}_l^i\}_{l\in\Links}$, $\ol{\bd}^i=\{\ol{d}_l^i\}_{l\in\Links}$.
\end{assum}
In other words, we assume the disturbance is contained within a union of boxes given by \eqref{eq:27}. This assumption is not particularly restrictive, as any compact subset of $\mathbb{R}^{\Links}$ can be approximated with boxes to arbitrary precision \cite{Kieffer:2002qf}, however the number of boxes $n_\Dist$ affects the computation time as detailed in Section \ref{sec:comp-requ}.
 
We let $F(\bx,\bs,\bd)=\{F_l(\xloc_l,\sloc_l,d_l)\}_{l\in\Links}:\Domain\times \Sig \times \Dist\to \Domain$ so that 
\begin{align}
  \label{eq:11}
\bx[t+1]=F(\bx[t],\bs[t],\bd[t]).
\end{align}

The set of states of system \eqref{eq:11} that are reachable from a set $Y\subset \Domain$ under the control signal $\bs\in\Sig$ in one timestep is denoted by the $\Post$ operator and given by
\begin{align}
  \label{eq:29}
  \Post(Y,\bsig)=\{\bx'=F(\bx,\bsig,\bd)\mid \bx\in Y, \bd\in\Dist\}.  
\end{align}
We call $\Post(Y,\bsig)$ the \emph{one step reachable set from $Y$ under $\bsig$}. The main features of the queue-based modeling approach proposed above such as finite saturation rates, finite queue capacity, a set of available signaling phases, and fixed turn ratios are standard in many modeling and simulation approaches such as \cite{Varaiya:2013ty, Varaiya:2013xe}, see also \cite{Papageorgiou:1995rm, Papageorgiou:2003ly} and references therein for discussions of queue-based modeling of traffic networks.

\section{Problem Formulation and Approach}
\label{sec:ltl-spec-traff}

We now define and motivate the need for control objectives expressible in LTL for traffic networks, and we outline a control synthesis approach which relies on a finite state representation of the traffic dynamics to meet these objectives.

LTL \emph{formulae} are generated inductively using the Boolean operators $\Or$ (disjunction), $\And$ (conjunction), $\Not$ (negation), and the temporal operators $\Next$ (next) and $\Until$ (until). From these, we obtain a suite of derived logical and temporal operators such as $\Impl$ (implication), $\Box$ (always), $\Diam$ (eventually), $\Box\Diam$ (infinitely often), {finite deadlines with repeated $\Next$,} and many others, see \cite{Baier:2008vn, clarke1999model}.

Formally, such formulae are expressed over a set of atomic propositions, which we restrict to be indicator expressions over subsets of $\Domain$ or predicates over the signaling state. For example, the atomic proposition $x_l\leq 10$ is true for all $\bx\in\Domain$ that satisfies the condition $x_l\leq 10$ (which constitutes a box subset of $\Domain$), and the atomic proposition $l\in\bsig$ is true for all signals that actuate link $l$. We will see in Section \ref{sec:cw-monot-traff} and Section \ref{sec:finite-state-repr} that restricting to atomic propositions corresponding to box subsets of $\Domain$ offers significant computational advantages.

  Semantically, LTL formulae are interpreted over a trajectory $\bx[t]$ and the corresponding input sequence $\bsig[t]$ for $t=0,1,\ldots$. For example, the state/input sequence $(\bx[t],\bsig[t])$ \emph{satisfies} the LTL formula 
$\ltlphi=\Box(x_l\leq 10)\And \Box\Diam(l\in \bsig)    $
if and only if $x_l[t]\leq 10$ for all $t$ and $l\in\bsig[t]$ infinitely often (\emph{i.e.}, for infinitely many $t$). Thus a trajectory satisfies a LTL formula if and only if the formula holds for the corresponding trace of atomic propositions that are valid at each time step.  A formal definition of the semantics of LTL over traces is readily available in the literature, \emph{e.g.},  \cite{Baier:2008vn, clarke1999model}, and is a natural interpretation of the above Boolean and temporal operators. For example, a trace satisfies $\Diam\ltlphi$ if and only if there exists a suffix of the trace satisfying $\ltlphi$.

Examples of LTL formulae representing desired control objectives relevant to traffic networks include those from the Introduction, as well as:
\begin{itemize}[leftmargin=10pt]
\item $\ltlphi_1=\Diam \Box (x_l\leq C)$ for some $C$\\
``Eventually, link $l$ will have less than $C$ vehicles and this will remain true for all time''
\item $\ltlphi_2=\Box \Diam (l\in\bsig)$\\
``Infinitely often, link $l$ is actuated''
\item $\ltlphi_3=\Box((l\in\sig_{v_1}) \Impl \Next (k\in\sig_{v_2}))$\\
``Whenever signal $v_1$ actuates link $l$, signal $v_2$ must actuate link $k$ in the next time step''
\item $\ltlphi_4=\Box(x_{l}\geq C_1 \Impl \Diam(x_l\leq C_2))$\\
``Whenever the number of vehicles on link $l$ exceeds $C_1$, it is eventually the case that the number of vehicles on link $l$ decreases below $C_2$.''
\end{itemize}
The main problem considered in this paper is as follows:

\vspace{5pt}
\noindent\textbf{Control Synthesis Problem.} \emph{Given a traffic network and an LTL formula $\ltlphi$ over a set of atomic propositions as described above, find {a control strategy that, at each time step, chooses a signaling input} such that all trajectories of the traffic network satisfy $\ltlphi$ from any initial condition.}

To solve the control synthesis problem, we propose computing a finite state abstraction that simulates (in a manner to be formalized below) the traffic network dynamics. {As we discuss in Section \ref{sec:synthesis-summary}, the result is a full-state feedback controller which requires finite memory.}  We rely on dynamical properties of the traffic network to compute the abstraction, and then apply tools from automata theory and formal methods to synthesize a finite-memory, state feedback control strategy solving the control synthesis problem.

\section{Componentwise Monotonicity of Traffic Networks}
\label{sec:cw-monot-traff}
To generate control strategies for the traffic network that guarantee satisfaction of a LTL formula, we first construct a finite state representation, or \emph{abstraction}, of the model defined in Section \ref{sec:sign-netw-traff}. We now define a \emph{componentwise monotonicity} property that simplifies this task and next show that the model in Section \ref{sec:sign-netw-traff} possesses this property.

\begin{definition}
\label{def:mm}
Consider the dynamical system 
\begin{equation}
  \label{eq:16}
z[t+1]=f(z[t],w[t])
\end{equation}
for $z\in\Z\subseteq\mathbb{R}^{n}$, $w\in\W\subseteq\mathbb{R}^m$ with $f:\Z\times\W\to\Z$ continuous. System \eqref{eq:16} is \emph{componentwise monotone} if there exists a \emph{signature matrix} $\Delta=[\delta_{ij}]_{i,j=1}^n$ with each $\delta_{ij}\in\{-1,1\}$ such that for all $i$,
\begin{align}
  \label{eq:47}
\delta_{ij}\ul{\xi}_j\leq \delta_{ij}\ol{\xi}_j \text{ and }\ul{w}_j\leq\ol{w}_j\ \forall j\in\Links\\
\text{implies}\qquad f(\ul{\xi},\ul{w})\leq f(\ol{\xi},\ol{w}))
\end{align}
for any $\ul{\xi},\ol{\xi}\in\Z$,  $\ul{w},\ol{w}\in \W$.
That is, \eqref{eq:16} is componentwise monotone if $f_i$ is monotonic in each $z$ variable and monotonically increasing in each $w$ variable.
\end{definition}

A definition similar to Definition \ref{def:mm} appears in \cite{Kulenovic:2006kx}, but omits dependence on a disturbance input.  %

 We now give a characterization of componentwise monotone systems which stipulates that $\partial f/\partial z$ be \emph{sign-stable}, that is, the sign structure of the Jacobian does not change as $z,w$ range over their domain.

\begin{lemma}
\label{lem:main}
  Consider dynamical system \eqref{eq:16} and further suppose that 
$f(z,w)=\begin{bmatrix}f_1(z,w)&\ldots f_n(z,w)\end{bmatrix}^T    $
 is Lipschitz continuous so that partial derivatives exist almost everywhere. If for all $i\in\{1,\ldots,n\}$: %
  \begin{align}
    \label{eq:17}
   \hspace{-3pt}\forall j\in\{1,\ldots,n\}\ \exists \delta_{ij}\in\{-1,1\}&:\delta_{ij}\frac{\partial f_i}{\partial z_j}(z,w)\geq 0 \emph{ a.e. }\hspace{-4pt}\\
    \label{eq:18}
\text{and }\qquad\qquad    \forall j\in\{1,\ldots,m\}&:\frac{\partial f_i}{\partial w_j}(z,w)\geq 0 \emph{ a.e. }
  \end{align}
where a.e.(almost everywhere) implies the condition must hold wherever the derivative exists, then \eqref{eq:16} is componentwise monotone.
\end{lemma}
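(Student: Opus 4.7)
The plan is to fix an index $i\in\{1,\dots,n\}$ and pick arbitrary $(\underline{\xi},\underline{w}),(\overline{\xi},\overline{w})\in\mathcal{Z}\times\mathcal{W}$ satisfying $\delta_{ij}\underline{\xi}_j\le\delta_{ij}\overline{\xi}_j$ and $\underline{w}_j\le\overline{w}_j$ for all $j$; the goal is to derive $f_i(\underline{\xi},\underline{w})\le f_i(\overline{\xi},\overline{w})$. I would connect the two endpoints by the straight line segment $\gamma(\lambda)=(1-\lambda)(\underline{\xi},\underline{w})+\lambda(\overline{\xi},\overline{w})$, $\lambda\in[0,1]$, and show that $h(\lambda):=f_i(\gamma(\lambda))$ is nondecreasing. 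Since $f_i$ is Lipschitz, $h$ is absolutely continuous on $[0,1]$ and satisfies $h(1)-h(0)=\int_0^1 h'(\lambda)\,d\lambda$, so it suffices to verify $h'\ge 0$ almost everywhere.

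Wherever $f_i$ is differentiable at $\gamma(\lambda)$, the chain rule gives
\begin{equation*}
h'(\lambda)=\sum_{j=1}^n\tfrac{\partial f_i}{\partial z_j}(\gamma(\lambda))\,(\overline{\xi}_j-\underline{\xi}_j)+\sum_{j=1}^m\tfrac{\partial f_i}{\partial w_j}(\gamma(\lambda))\,(\overline{w}_j-\underline{w}_j).
\end{equation*}
Rewriting each first-sum term as $\delta_{ij}^2\bigl(\delta_{ij}\tfrac{\partial f_i}{\partial z_j}\bigr)\bigl(\delta_{ij}(\overline{\xi}_j-\underline{\xi}_j)\bigr)$ exhibits it as a product of nonnegative quantities by \eqref{eq:17} and the ordering hypothesis, while each second-sum term is manifestly nonnegative by \eqref{eq:18} and $\underline{w}_j\le\overline{w}_j$. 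Hence the chain-rule expression for $h'$ is pointwise nonnegative wherever valid, which would yield the desired inequality upon integrating.

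The main obstacle is measure-theoretic: the sign bounds in \eqref{eq:17}--\eqref{eq:18} hold a.e.\ with respect to $(n+m)$-dimensional Lebesgue measure on $\mathcal{Z}\times\mathcal{W}$, whereas the segment $\gamma([0,1])$ is a null set in that ambient space, so one cannot naively restrict the a.e.\ bounds to $\gamma$ and conclude $h'\ge 0$ a.e.\ on $[0,1]$. I would resolve this via the absolute-continuity-on-lines property of Lipschitz maps together with Fubini's theorem: for almost every parallel translate of $\gamma$, the sign conditions hold at a.e.\ point of the translate, and hence the desired inequality between $f_i$ at the two endpoints holds for almost every such pair of translated endpoints. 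Continuity of $f_i$ (a consequence of the Lipschitz hypothesis) then extends the inequality to the particular pair $(\underline{\xi},\underline{w}),(\overline{\xi},\overline{w})$, completing the verification of componentwise monotonicity.
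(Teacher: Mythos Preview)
Your proposal is correct and follows essentially the same approach as the paper: integrate along a line segment using the Lipschitz/absolute-continuity property, use the $\delta_{ij}^2=1$ trick to show each summand is nonnegative, and handle the measure-theoretic mismatch by passing to almost-every parallel translate and then invoking continuity of $f_i$. The only cosmetic difference is that the paper splits the path into two axis-aligned pieces (first varying $\xi$ with $w=\underline{w}$ fixed, then varying $w$ with $\xi=\overline{\xi}$ fixed) rather than your single diagonal segment in $(z,w)$-space; this does not change the substance of the argument.
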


\begin{proof}
 Let $[\delta_{ij}]_{i,j=1}^n$ be as in the hypothesis of the Lemma.
By the Fundamental Theorem of Calculus, for all $\ul{w}$ and for almost all\footnote{Eq. \eqref{eq:13} requires existence of $\partial g/\partial z$ almost everywhere along the line segment connecting $\ul{z}$ and $\ol{z}$, which holds for almost all $\ol{z}$ for fixed $\ul{z}$ \cite[Ch. 2]{Clarke:1990kx}. Similarly, $f_i(\ol{\xi},\ol{w})-f_i(\ol{\xi},\ul{w})$ holds for almost all $\ol{w}$ for fixed $\ul{w}$.} $\ul{\xi}$, $\ol{\xi}$ satisfying $\delta_{ij}\ul{\xi}_j\leq \delta_{ij} \ol{\xi}_j$ for all $j$,%
\begin{align}
  \label{eq:13}
&    f_i(\ol{\xi},\ul{w})-f _i(\ul{\xi},\ul{w})=\\
&\textstyle \left(\int_{0}^1\sum_{j=1}^n\frac{\partial f_i}{\partial \xi_j}(\ul{\xi}+r(\ol{\xi}-\ul{\xi}),\ul{w})(\ol{\xi}_j-\ul{\xi}_j)dr\right)\geq 0
\end{align}
where nonnegativity follows because $\delta_{ij}(\ol{\xi}_j-\ul{\xi}_j)\geq 0$, $\delta_{ij}\partial f_i/\partial \xi_j\geq 0$, and $\delta_{ij}^2=1$ for all $i,j$. Similarly, for almost all $\ol{w}\geq\ul{w}$, $  f_i(\ol{\xi},\ol{w})-f_i(\ol{\xi},\ul{w})\geq 0$.
It follows by continuity of $f_i$ that $f_i(\ol{\xi},\ol{w})-f_i(\ul{\xi},\ul{w})\geq 0$ for all $\ul{\xi}$, $\ol{\xi}$ that satisfy $\delta_{ij}\ul{\xi}_j\leq \delta_{ij}\ol{\xi}_j$ and all $\ul{w}\leq\ol{w}$, for all $i$, completing the proof.
\end{proof}

The critical feature of  componentwise monotone systems we wish to exploit is that over approximating the one-step reachable set from a box of initial conditions is computationally efficient. In particular, the reach set is contained within a box defined by the value of $f_i$ at two particular points for each $i$, regardless of the dimension of the spaces $\Z$ and $\W$:

\begin{lemma}
  \label{lem:reach}
Let \eqref{eq:16} be componentwise monotone with signature matrix $\Delta=[\delta_{ij}]_{i,j=1}^n$ and assume $\Z$ is a closed box. Given $\ul{z},\ol{z}\in\Z$ and $\ul{w},\ol{w}\in\W$ with $\ul{z}\leq\ol{z}$ and $\ul{w}\leq\ol{w}$. Let $\ul{\xi}^i\in\Z$ and $\ol{\xi}^i\in\Z$ be defined elementwise as follows for each $i$:
\begin{alignat}{2}
  \label{eq:48}
\ul{\xi}^i_j&=
\begin{cases}
  \ul{z}_j&\text{if }\delta_{ij}=1\\
  \ol{z}_j&\text{if }\delta_{ij}=-1
\end{cases}, \quad \ol{\xi}^i_j&=
\begin{cases}
  \ol{z}_j&\text{if }\delta_{ij}=1\\
  \ul{z}_j&\text{if }\delta_{ij}=-1.
\end{cases}
\end{alignat}

Then
\begin{align}
  \label{eq:22}
  f_i(\ul{\xi}^i,\ul{w})\leq f_i(z,w)\leq f_i(\ol{\xi}^i,\ol{w})\qquad \forall i
\end{align}
for all $z,w$ such that $\ul{z}\leq z \leq \ol{z}$ and $\ul{w}\leq w \leq \ol{w}$.
\end{lemma}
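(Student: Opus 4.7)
The plan is to apply Definition~\ref{def:mm} componentwise, once to obtain each inequality in \eqref{eq:22}. The key insight is that for each fixed output index $i$, the vectors $\ul{\xi}^i$ and $\ol{\xi}^i$ are constructed precisely so that they bracket $z$ in the partial order twisted by row $i$ of the signature matrix $\Delta$.

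First I would fix an arbitrary $i\in\{1,\ldots,n\}$ and verify the componentwise inequalities
\[
\delta_{ij}\ul{\xi}^i_j \;\leq\; \delta_{ij} z_j \;\leq\; \delta_{ij} \ol{\xi}^i_j \qquad \forall j\in\{1,\ldots,n\}
\]
by case analysis on the sign of $\delta_{ij}$. If $\delta_{ij}=1$ then by construction $\ul{\xi}^i_j=\ul{z}_j$ and $\ol{\xi}^i_j=\ol{z}_j$, so the desired chain is just $\ul{z}_j\leq z_j\leq \ol{z}_j$, which holds by hypothesis. If $\delta_{ij}=-1$ then $\ul{\xi}^i_j=\ol{z}_j$ and $\ol{\xi}^i_j=\ul{z}_j$, and multiplying $\ul{z}_j\leq z_j\leq\ol{z}_j$ by $-1$ reverses the inequalities and gives exactly what is required.

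Next I would invoke Definition~\ref{def:mm} twice, using only its $i$-th output component. For the lower bound, apply the definition to the pair $(\ul{\xi},\ol{\xi})=(\ul{\xi}^i,z)$ with disturbances $(\ul{w},w)$: the hypotheses $\delta_{ij}\ul{\xi}^i_j\leq\delta_{ij} z_j$ and $\ul{w}_j\leq w_j$ are exactly those established above and in the statement, so $f_i(\ul{\xi}^i,\ul{w})\leq f_i(z,w)$ follows. For the upper bound, apply the definition instead to $(\ul{\xi},\ol{\xi})=(z,\ol{\xi}^i)$ with disturbances $(w,\ol{w})$ to conclude $f_i(z,w)\leq f_i(\ol{\xi}^i,\ol{w})$. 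Since $i$ was arbitrary, \eqref{eq:22} holds for all $i$.

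There is no real obstacle here; the proof is essentially a rewriting of Definition~\ref{def:mm} specialized to the extreme corners of the box determined by the $i$-th row of $\Delta$. The only delicate point is bookkeeping the sign convention in the $\delta_{ij}=-1$ case, which is handled cleanly by the two-case argument above. Note that we do not need to invoke Lemma~\ref{lem:main} or any differentiability; the argument uses only the order-theoretic definition of componentwise monotonicity.
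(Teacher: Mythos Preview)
Your proposal is correct and follows essentially the same route as the paper's proof: verify that $\delta_{ij}\ul{\xi}^i_j\leq \delta_{ij}z_j\leq \delta_{ij}\ol{\xi}^i_j$ for all $j$, then apply Definition~\ref{def:mm} (once for each inequality) to conclude. Your case analysis on $\delta_{ij}$ and explicit identification of which pairs are fed into the definition simply spell out what the paper compresses into the single word ``Observe.''
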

\begin{proof}
  Observe that $\delta_{ij}\ul{\xi}^i_j\leq \delta_{ij}z_j$ for all $i,j$ for all $\ul{z}\leq z\leq \ol{z}$, and symmetrically, $\delta_{ij}z_j\leq\delta_{ij}\ol{\xi}^i_j$ for all $i,j$ for all $\ul{z}\leq z\leq \ol{z}$. The Lemma then follows immediately from Definition \ref{def:mm}.
\end{proof}
This remarkable feature of componentwise monotone systems is analogous to well-known results for monotone systems \cite{Hirsch:1985fk, Angeli:2003fv, Smith:2008fk}, but componentwise monotonicity allows consideration of a much broader class of systems, including traffic networks, which are generally not monotone. %

\begin{rem}
  \label{rem:tight}
  Observe that the lower and upper bounds in \eqref{eq:22} are achieved for appropriate choice of $z$ and $w$, thus the approximation of the one-step reachable set is tight.
\end{rem}

To prove that the traffic network dynamics developed in Section \ref{sec:sign-netw-traff} are componentwise monotone, we first require a technical assumption:
\begin{assum}
\label{assum:1}
  For all $l\in\Links$,
  \begin{align}
    \label{eq:38}
    c_l\leq \denscap_l -\frac{\beta_{kl}}{\alpha_{kl}}c_k\quad \forall k\in\Lup_l.
  \end{align}
\end{assum}
Assumption \ref{assum:1} is a sufficient condition for ensuring that {if a link has inadequate capacity and blocks upstream flow, then this link's queue will not empty in one time step. This effectively is an assumption that the time step is sufficiently small to appropriately capture the queuing phenomenon. Specifically, the saturation} flow rate $c_l$ of link $l$ is in units of vehicles per time step and, thus, is implicitly a function of the chosen time step. Physically, $c_l$ is required to decrease with decreased time step and thus Assumption \ref{assum:1} is satisfied when a sufficiently small time step is used for the model.

\begin{thm}
  \label{thm:mm_traffic}
The traffic network model is componentwise monotone for any signaling input $\bs\in\Sig$. In particular, $F_l$ is increasing in $x_k$ for $k$ downstream or upstream of link $l$ or equal to $l$, and decreasing in $x_k$ for $k$ adjacent to link $l$. 
\end{thm}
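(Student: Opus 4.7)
The plan is to invoke Lemma~\ref{lem:main}: it suffices to verify that $F$ is Lipschitz continuous and that, for every pair $(l,k)$, the sign of $\partial F_l/\partial x_k$ is stable across $\Domain\times\Sig\times\Dist$, while $\partial F_l/\partial d_l \geq 0$ almost everywhere. Lipschitz continuity is immediate since $F_l$ is built from pointwise minima, sums, and affine combinations of Lipschitz pieces. Writing $F_l=\min\{\denscap_l,G_l\}$ with $G_l = x_l - \flowout_l + \sum_{j\in\Lup_l}\beta_{jl}\flowout_j + d_l$, all partials vanish on the branch $F_l=\denscap_l$, so it suffices to analyze the partials of $G_l$ almost everywhere.

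Each $\flowout$ is a minimum of affine terms, so its partial derivatives are either zero or equal to the coefficient of the uniquely active argument of the minimum. I would proceed by cases based on where $k$ sits relative to $l$. For $k \in \Ldown_l\setminus\{l\}$, $x_k$ enters only as the supply term in $\flowout_l$, so $\partial G_l/\partial x_k = -\partial\flowout_l/\partial x_k \geq 0$. For $k\in\Lup_l$, $x_k$ enters $\flowout_k$ as its demand term, yielding $\partial G_l/\partial x_k = \beta_{kl}\partial\flowout_k/\partial x_k \geq 0$; in the two-cycle case $k\in\Lup_l\cap\Ldown_l$, the additional supply contribution from $\flowout_l$ is also nonnegative. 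For $k\in\Ladj_l$, one has $\Lup_l = \Lup_k$, and $x_k$ enters each $\flowout_j$ (with $j\in\Lup_l$, $\beta_{jk}\neq 0$) as a supply term, giving $\partial G_l/\partial x_k = \sum_j \beta_{jl}\partial\flowout_j/\partial x_k \leq 0$. For $k\notin\Lloc_l$, $x_k$ does not appear at all. Finally the disturbance satisfies $\partial F_l/\partial d_l \in\{0,1\}$.

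The main obstacle, and the only non-routine step, is the self-derivative $k=l$: the variable $x_l$ appears \emph{both} as the demand argument of $\flowout_l$ \emph{and} as a supply argument of each $\flowout_j$ for $j\in\Lup_l$ with $\beta_{jl}\neq 0$, so
\begin{equation*}
\partial G_l/\partial x_l = 1 - \partial\flowout_l/\partial x_l - \sum_{j\in\Lup_l}\beta_{jl}\partial\flowout_j/\partial x_l.
\end{equation*}
Each subtracted term could individually equal $1$ or $\alpha^{s}_{jl}$, so naively the sum might be negative. This is precisely where Assumption~\ref{assum:1} is needed: if $x_l$ is the active argument of $\flowout_l$ then $x_l \leq c_l$, and Assumption~\ref{assum:1} then yields $(\alpha^{s}_{jl}/\beta_{jl})(\denscap_l - x_l) \geq c_j \geq \flowout_j$, showing that $l$ cannot simultaneously be the binding supply constraint of any $\flowout_j$. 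Hence at most one of the subtracted terms is nonzero almost everywhere, leaving $\partial G_l/\partial x_l\in\{0,1\}\geq 0$.

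Assembling these cases yields a signature matrix with $\delta_{lk}=+1$ for $k\in\Ldown_l\cup\Lup_l\cup\{l\}$ and $\delta_{lk}=-1$ for $k\in\Ladj_l$ (any sign is admissible for $k\notin\Lloc_l$, where the derivative vanishes). Lemma~\ref{lem:main} then delivers componentwise monotonicity with exactly the sign pattern stated in the theorem.
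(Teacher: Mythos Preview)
Your approach mirrors the paper's exactly: invoke Lemma~\ref{lem:main}, note Lipschitz continuity from the piecewise-affine structure, and do a case analysis on $k$ relative to $\Ldown_l$, $\Lup_l$, $\Ladj_l$, $\{l\}$, and $\Links\setminus\Lloc_l$. The non-self cases are fine and match the paper.

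However, your treatment of the self-derivative $k=l$ has a genuine gap. First, a sign slip: since $G_l = x_l - \flowout_l + \sum_j \beta_{jl}\flowout_j + d_l$, the correct expression is $\partial G_l/\partial x_l = 1 - \partial\flowout_l/\partial x_l + \sum_j \beta_{jl}\,\partial\flowout_j/\partial x_l$, with $\partial\flowout_j/\partial x_l \in \{0,\,-\alpha^{s}_{jl}/\beta_{jl}\}$ (the supply term is \emph{decreasing} in $x_l$). Your two sign errors happen to cancel, so the intended inequality is still the right one to prove. The substantive issue is your conclusion that ``at most one of the subtracted terms is nonzero.'' Assumption~\ref{assum:1} only gives you one direction: if $x_l$ is the active argument of $\flowout_l$, then $l$ is not the binding supply for any upstream $\flowout_j$. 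It says nothing about the case $\partial\flowout_l/\partial x_l = 0$ (e.g.\ when $c_l$ or a downstream supply binds). In that regime, \emph{several} upstream links $j\in\Lup_l$ can simultaneously have $l$ as their binding supply constraint, giving $\partial G_l/\partial x_l = 1 - \sum_{j\in J}\alpha^{s}_{jl}$ for some subset $J\subseteq\Lup_l$. Your claimed conclusion $\partial G_l/\partial x_l\in\{0,1\}$ is therefore false in general.

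The missing ingredient is the supply-ratio normalization~\eqref{eq:4}: since $\sum_{j\in s_{\tail(l)}}\alpha^{s}_{jl}=1$ and only actuated links contribute, the worst case is $\partial G_l/\partial x_l = 1 - 1 = 0$. This is precisely how the paper closes the case: Assumption~\ref{assum:1} rules out simultaneous nonzero contributions from $\flowout_l$ and the inflow, and then~\eqref{eq:4} bounds the aggregate inflow derivative by $-1$, yielding $\partial F_l/\partial x_l \geq 0$ a.e. Add that step and your argument is complete.
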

\begin{proof}
For  fixed $\bs\in\Sig$, we show that $F(\bx,\bs,\bd)$ satisfies conditions \eqref{eq:17} and \eqref{eq:18} of Lemma \ref{lem:main} with $\bx,\bd$ replacing $z,w$.  Observe that $F$ is continuous and piecewise differentiable by \eqref{eq:5}--\eqref{eq:7-2} and \eqref{eq:11}, thus it is Lipschitz continuous \cite{Scholtes:2012fk}. {The minimum function in \eqref{eq:5} implies that $F$ is differentiable almost everywhere.} We first have
 $ \frac{\partial F_l}{\partial d_l}\in\{0,1\}$ \text{a.e.}
by \eqref{eq:7-2}, satisfying \eqref{eq:18}. Now consider $\partial F_l/\partial x_k$. For \eqref{eq:18}, we consider four exhaustive cases:

  \begin{itemize}[leftmargin=10pt]
\item Case 1, $k\in(\Ldown_l\cup \Lup_l)\backslash \{l\}$.  From \eqref{eq:5}--\eqref{eq:7-2}{, link $k$ may block the outflow of link $l$ when $k\in\Ldown_l$, or link $k$ may contribute to the inflow to link $l$ if $k\in\Lup_l$, thus} we have $\frac{\partial F_l}{\partial x_k}\in\{0,-\frac{\partial \flowout_l}{\partial x_k},\beta_{kl}\frac{\partial \flowout_k}{\partial x_k},-\frac{\partial \flowout_l}{\partial x_k}+\beta_{kl}\frac{\partial \flowout_k}{\partial x_k}\}$ a.e. where the fourth possibility occurs only if $k\in\Ldown_l\cap\Lup_l$. But $\frac{\partial \flowout_l}{\partial x_k}\in \{0,-{\alpha_{lk}^{\sig_{\head(l)}}}/{\beta_{lk}}\}$ a.e. and $\frac{\partial \flowout_k}{\partial x_k}\in\{0,1\}$ a.e., thus $\frac{\partial F_l}{\partial x_k}\geq 0$ a.e., satisfying \eqref{eq:17}.

\item Case 2, $k=l$. We have $\frac{\partial \flowout_l}{\partial x_l}\in\{0,1\}$ a.e. and, for $j\in\Lup_l$, $\frac{\partial \flowout_j}{\partial x_l}\in\{0,\alpha_{jl}^{\sig_{\head(j)}}/\beta_{jl}\}$ a.e., however, Assumption \ref{assum:1} ensures that, a.e., either  $\frac{\partial \flowout_l}{\partial x_l}=0$ or {$\frac{\partial \flowin_l}{\partial x_l}=0$, \emph{i.e.}, }$\frac{\partial \flowout_j}{\partial x_l}=0$ for all $j\in\Lup_l$. Thus $\frac{\partial F_l}{\partial x_l}\in\{0,1,1+\sum_{j\in \Lup_l}\beta_{jl}\frac{\partial \flowout_j}{\partial x_l}\}$ a.e. But $\sum_{j\in \Lup_l}\beta_{jl}\frac{\partial \flowout_j}{\partial x_l}\geq -\sum_{j\in\Lup_l}\alpha^{\sig_{\head(j)}}_{jl}= -1$ by \eqref{eq:4} (recall that $\head(j)=\tail(l)$ for all $j\in\Lup_l$), {that is, $\partial \flowin_l/\partial x_l\geq -1$}, thus $\frac{\partial F_l}{\partial x_l}\geq 0$ a.e., satisfying \eqref{eq:17}.
\item Case 3, $k\in\Ladj_l$. {In this case, inadequate capacity of link $k$ may block flow to link $l$, as discussed above.} We have $\frac{\partial F_l}{\partial x_k}=\sum_{j\in\Lup_l}\beta_{jl}\frac{\partial \flowout_j}{\partial x_k}$. Since $\frac{\partial \flowout_j}{\partial x_k}\in \{0,-{\alpha_{jk}^{\sig_{\head(j)}}}/{\beta_{jk}}\}$ a.e., we have $\frac{\partial F_l}{\partial x_k}\leq 0$ a.e., satisfying \eqref{eq:17}.
\item Case 4, $k\not\in \Lloc_l$. Then $\frac{\partial F_l}{\partial x_k}=0$, trivially satisfying \eqref{eq:17}.
\end{itemize}

\end{proof}
The following corollary implies that the one-step reachable set of the traffic dynamics from a (closed) box  $\I$ for any given signaling input $\bsig$ is over-approximated by the union of boxes, one box for each $i=1,\dots n_\Dist$, where each of these boxes is efficiently computed by evaluating $F_l$ at two particular points for each $l\in\Links$. The obtained over-approximation is denoted with the $\oPost$ operator. This critical result allows efficient computation of a finite state representation of the traffic dynamics, as detailed in Section \ref{sec:finite-state-repr}. 

\begin{corollary}
\label{cor:1}
Consider the set $\I=\{\bx\mid \ul{\bx}  \leq \bx \leq \bar{\bx}\}$ for $\ul{\bx},\ol{\bx}\in\Domain$, and for each $l\in\Links$, define $\ul{\bxi}^l(\ul{\bx},\ol{\bx})=\{\ul{\xi}^l_k(\ul{x}_k,\ol{x}_k)\}_{k\in \Lloc_l}$, $\ol{\bxi}^l(\ul{\bx},\ol{\bx})=\{\ol{\xi}^l_k(\ul{x}_k,\ol{x}_k)\}_{k\in \Lloc_l}$ where
\begin{align}
  \label{eq:28}
     \ul{\xi}^l_k(\ul{x}_k,\ol{x}_k)&=
     \begin{cases}
       \ul{x}_k&\text{if $k\in\Ldown_l\cup \Lup_l$}\\
       \ol{x}_k&\text{if $k\in\Ladj_l$}
     \end{cases}\\
  \label{eq:28-2}     \ol{\xi}^l_k(\ul{x}_k,\ol{x}_k)&=
     \begin{cases}
       \ol{x}_k&\text{if $k\in\Ldown_l\cup \Lup_l$}\\
       \ul{x}_k&\text{if $k\in\Ladj_l$}.
     \end{cases}
   \end{align}
Then for all $\bsig\in\Sig$, $\Post(\I,\bsig)\subseteq \oPost(\I,\bsig)$ where
\begin{align}
\notag &\oPost(\I,\bsig):=\\
  \label{eq:30}&\bigcup_{i=1}^{n_\Dist}\{\bx'\mid F_l(\ul{\bxi}^l,\sloc_l,\ul{d}^i_l)\leq x'_l \leq F_l(\ol{\bxi}^l,\sloc_l,\ol{d}^i_l)\quad \forall l\in\Links\}.
\end{align}
\end{corollary}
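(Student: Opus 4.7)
The plan is to assemble the result directly from Theorem \ref{thm:mm_traffic} and Lemma \ref{lem:reach}, applied box-by-box over the disturbance cover. Theorem \ref{thm:mm_traffic} already identifies the signature structure of the system: for each $l\in\Links$, $F_l$ is increasing in $x_k$ for $k\in\Ldown_l\cup\Lup_l$ (with $l\in\Ldown_l$ by convention), decreasing in $x_k$ for $k\in\Ladj_l$, and independent of $x_k$ for $k\not\in\Lloc_l$. Thus one can read off a valid signature matrix $\Delta=[\delta_{lk}]$ with $\delta_{lk}=+1$ on $\Ldown_l\cup\Lup_l$ and $\delta_{lk}=-1$ on $\Ladj_l$ (the remaining entries are irrelevant since $F_l$ does not depend on those coordinates).

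First, I would observe that the definitions \eqref{eq:28}--\eqref{eq:28-2} are exactly the extremal corners prescribed by Lemma \ref{lem:reach} for this signature matrix, restricted to the local coordinates $\Lloc_l$ that $F_l$ actually depends on. Specifically, $\ul{\bxi}^l$ selects $\ul{x}_k$ whenever $\delta_{lk}=+1$ and $\ol{x}_k$ whenever $\delta_{lk}=-1$, while $\ol{\bxi}^l$ is the complementary choice. Hence Lemma \ref{lem:reach} applies componentwise to $F_l$ with $z=\xloc_l$ and $w=d_l$.

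Next, I would fix a single disturbance box $\Dist_i=\{\bd\mid \ul{\bd}^i\leq \bd\leq \ol{\bd}^i\}$ from Assumption \ref{assum:1}. For any $\bx\in\I$ and $\bd\in\Dist_i$ we have $\ul{x}_k\leq x_k\leq \ol{x}_k$ for every $k\in\Lloc_l$ and $\ul{d}^i_l\leq d_l\leq \ol{d}^i_l$; applying Lemma \ref{lem:reach} to $F_l$ therefore yields
\begin{equation*}
F_l(\ul{\bxi}^l,\sloc_l,\ul{d}^i_l)\;\leq\; F_l(\xloc_l,\sloc_l,d_l)\;\leq\; F_l(\ol{\bxi}^l,\sloc_l,\ol{d}^i_l)
\end{equation*}
for every $l\in\Links$. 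This shows that $F(\bx,\bsig,\bd)$ lies in the $i$-th box appearing on the right-hand side of \eqref{eq:30}.

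Finally, since $\Dist\subseteq \bigcup_{i=1}^{n_\Dist}\Dist_i$, every $\bd\in\Dist$ lies in some $\Dist_i$, so taking the union over $i$ of the per-box inclusions gives $\Post(\I,\bsig)\subseteq \oPost(\I,\bsig)$. The only step requiring care is the bookkeeping between the abstract signature matrix of Lemma \ref{lem:reach} and the concrete sets $\Ldown_l,\Lup_l,\Ladj_l$; the fact that $F_l$ depends only on $\xloc_l$ makes this mapping unambiguous and also explains why the corners need only be defined on $\Lloc_l$ rather than on all of $\Links$. No substantive obstacle arises beyond this indexing.
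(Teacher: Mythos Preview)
Your proposal is correct and follows essentially the same route as the paper's proof: invoke Theorem~\ref{thm:mm_traffic} to identify the signature matrix, apply Lemma~\ref{lem:reach} componentwise on each disturbance box $\Dist_i$ to obtain the per-box inclusion, and then take the union over $i$. The only slip is a reference: the disturbance cover $\Dist\subseteq\bigcup_i\Dist_i$ comes from the first assumption in the paper, not from Assumption~\ref{assum:1} (which is the technical bound on saturation flows used in the proof of Theorem~\ref{thm:mm_traffic}).
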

\begin{proof}
  By substituting $\ul{\bx},\ol{\bx}$ for $\ul{z},\ol{z}$ and $\ul{d}_l^i,\ol{d}_l^i$ for $\ul{w},\ol{w}$ in Lemma \ref{lem:reach} and defining $f(\bx,\bd)\triangleq F(\bx,\bsig,\bd)$, we obtain $\{\bx'=F(\bx,\bsig,\bd)\mid \bx\in\I,\bd\in\Dist_i\}\subseteq \{\bx'\mid F_l(\ul{\bxi}^l,\sloc_l,\ul{d}^i_l)\leq x'_l \leq F_l(\ol{\bxi}^l,\sloc_l,\ol{d}^i_l)\ \forall l\in\Links\}$ for all $i=1,\ldots,n_\Dist$.
The corollary follows from the trivial fact that $\Post(\I,\bsig)=\cup_{i=1}^{n_\Dist}\{\bx'=F(\bx,\bsig,\bd)\mid \bx\in\I,d\in\Dist_i\}$.
\end{proof}

\section{Finite State Representation}
\label{sec:finite-state-repr}

To apply the powerful tools of LTL synthesis, we require a finite state representation of the traffic network model.  In general, obtaining finite state abstractions is a difficult problem and existing techniques do not scale well. In this section, we exploit the componentwise monotonicity properties developed above and propose an efficient method for determining a finite state representation of the traffic network dynamics.
\subsection{Finite State Abstraction}

\begin{definition}[Box partition]
  For finite index set $\Q$, the set $\{\I_q\}_{q\in\Q}$ is a \emph{box partition of $\Domain$} (or simply a \emph{box partition}), if each $\I_q\subseteq \Domain$ is a box, $\cup_{q\in\Q}\I_q=\Domain$, and $\I_q\cap \I_{q'}=\emptyset$ for all $q,q'\in\Q$. For $q\in\Q$, let $\ul{\bx}_q=\{\ul{x}_{q,l}\}_{l\in\Links}$, $\ol{\bx}_q=\{\ol{x}_{q,l}\}_{l\in\Links}$ denote the lower  and upper corners, respectively, of $\I_q$, that is, $\I_q=\{\bx\mid \ul{\bx}_q\prec_q^1\bx\prec_q^2\ol{\bx}_q\}$ where $\prec^1_q=\{\prec^1_{q,l}\}_{l\in\Links}$, $\prec^2_q=\{\prec^2_{q,l}\}_{l\in\Links}$, and $\prec^1_{q,l}, \prec^2_{q,l}\in\{<,\leq\}$.
\end{definition}

For a box partition $\{\I_q\}_{q\in\Q}$ of $\Domain$, let $\pi:\Domain\to\Q$ be uniquely defined by the condition $x\in\I_{\pi(x)}$, that is, $\pi(\cdot)$ is the natural projection from the domain $\Domain$ to the (index set of) boxes.
A special case of a box partition of a rectangular domain is the following:
\begin{definition}[Gridded box partition]
  For $\Domain=\{\bx=\{x_l\}_{l\in\Links}\mid \ux_l\leq x_l\leq\barx_l\}$, a box partition $\{\I_q\}_{q\in\Q}$ of $\Domain$ is a \emph{gridded} box partition if for each $l\in\Links$, there exists $N_l\in\{1,2,\ldots\}$ and a set of intervals $\{I^l_1,\ldots,I^l_{N_l}\}$ such that $\cup_{i=1}^{N_l}I^l_i=[\ux_l,\barx_l]$ and for each $q\in\Q$, there exists indices $q_l\in\{1,\ldots,N_l\}$ such that $\I_q=\prod_{l\in\Links}I^l_{q_l}$.  For gridded box partitions, we make the identification $\Q\cong\prod_{l\in\Links}\{1,\ldots,N_l\}$ for all $l\in\Links$.
\end{definition}

\begin{figure}
  \centering
\begin{tabular}{c c c}
  \begin{tikzpicture}[scale=.7]
      \def\c{1};
\fill[dgreen,opacity=.3] (0,0) rectangle(4,2.7);

\draw[->] (0,0) -- (4.5,0);
\draw[->] (0,0) -- (0,3);

\draw[-] (1,0) -- +(0,2.7);
\draw[-] (2,0) -- +(0,2.7);
\draw[-] (3,0) -- +(0,2.7);
\draw[-] (4,0) -- +(0,2.7);
\draw[-] (0,.9) -- +(4,0);
\draw[-] (0,1.8) -- +(4,0);
\draw[-] (0,2.7) -- +(4,0);
\node at (.5,.45) {$q_9$};
\node at (1.5,.45) {$q_{10}$};
\node at (2.5,.45) {$q_{11}$};
\node at (3.5,.45) {$q_{12}$};
\node at (.5,1.35) {$q_5$};
\node at (1.5,1.35) {$q_6$};
\node at (2.5,1.35) {$q_7$};
\node at (3.5,1.35) {$q_8$};
\node at (.5,2.25) {$q_1$};
\node at (1.5,2.25) {$q_2$};
\node at (2.5,2.25) {$q_3$};
\node at (3.5,2.25) {$q_4$};

\node at (4,-.3) {$x_l^\text{max}$};
\node at (-.5,2.7) {$x_k^\text{max}$};
\end{tikzpicture}&
   \begin{tikzpicture}[scale=.7]
\fill[dgreen,opacity=.3] (0,0) rectangle(4,2.7);

\draw[->] (0,0) -- (4.5,0);
\draw[->] (0,0) -- (0,3);

\draw[-] (1,0) -- (1,2.7);
\draw[-] (0,1) -- (1,1);
\draw[-] (1,1.9) -- (4,1.9);
\draw[-] (1.8,1.9) -- (1.8,2.7);
\draw[-] (2.5,0) -- (2.5,1.9);
\draw[-] (0,2.7) -- +(4,0);
\draw[-] (4,0) -- +(0,2.7);
\node at (4,-.3) {$x_l^\text{max}$};
\node at (-.5,2.7) {$x_k^\text{max}$};
\node at (.5,1.85) {$q_1$};
\node at (.5,.5) {$q_4$};
\node at (1.75,.95) {$q_5$};
\node at (1.4,2.3) {$q_2$};
\node at (2.95,2.3) {$q_3$};
\node at (3.25,.95) {$q_6$};
  \end{tikzpicture}\\
(a)&(b)
\end{tabular}
\caption{Stylized depictions of two box partitions. (a) A gridded box partition with regularly sized intervals. (b) A nongridded box partition.}
  \label{fig:box}
\end{figure}

When a box partition is not a gridded box partition, we say it is \emph{nongridded}.
Fig. \ref{fig:box} shows two examples of box partitions, one of which is a gridded box partition.
From a box partition of the traffic network domain $\Domain$, we obtain a finite state representation, or \emph{abstraction}, of the traffic network model as follows. Each element of the box partition corresponds to a single state in the resulting finite state transition system, and to obtain a computationally tractable approach, we propose a method for efficiently obtaining a finite state abstraction using the componentwise monotonicity properties developed above:

\begin{definition}[CM-induced finite state abstraction]
\label{def:aqts}
 Given a box partition $\{\I_q\}_{q\in\Q}$ of $\Domain$, the nondeterministic \emph{componentwise monotonicity-induced (CM-induced) finite state abstraction}, or simply the \emph{finite state abstraction},  of the traffic model is the transition system $\T=(\Q,\Sig,\to)$ where $\Q$ is the index set of the box partition, $\Sig$ is the available signaling inputs, and $\to$ is defined by:
  \begin{align}
    \label{eq:31}
    (q,\bsig,q')\in\to\quad \text{ if and only if }\quad  \I_{q'}\cap\oPost(\cl(\I_q),\bsig)\neq \emptyset.
\end{align}
\end{definition}

\begin{rem}
  We must take the closure of $\I_q$ in \eqref{eq:31} as the $\oPost$ operator and relevant properties (\emph{e.g.}, \eqref{eq:30}) assume a closed box. This allows efficient algorithms for constructing $\to$ via \eqref{eq:31} as detailed below.
\end{rem}
Note that the CM-induced finite state abstraction is nondeterministic. Nondeterminism arises from the disturbance input $\bd$ and from the fact that a collection of continuous states is abstracted to one discrete state.

By the definition of the finite state abstraction above, for any trajectory $\bx[t]$, $t\in\mathbb{N}$ generated by the traffic model under input sequence $\bs[t]$, $t\in\mathbb{N}$, there exists a unique sequence $q[t]$, $t\in\mathbb{N}$ with each $q[t]\in\Q$ such that $x[t]\in \I_{q[t]}$ and $q[t]\overset{\bs[t]}{\to}q[t+1]$. A transition system satisfying this property is said to be a \emph{discrete abstraction} of the dynamical system~\eqref{eq:11}. A controller synthesized from the abstraction to satisfy an LTL formula as described in Section \ref{sec:cw-monot-traff} can be applied to the original traffic network with the same guarantees because the abstraction \emph{simulates} the original traffic network \cite{Baier:2008vn}. However, abstractions generally result in unavoidable conservatism, that is, nonexistence of an appropriate control strategy from the abstraction does not imply nonexistence of a control strategy for the original traffic network.

The following corollary to Remark \ref{rem:tight} implies that the finite state abstraction suggested in Definition \ref{def:aqts} does not introduce excessive conservatism; specifically, Corollary \ref{cor:2} tells us that if $(q,\bsig,q')\in\to$, then for each link $l$, it is possible for the state of link $l$ to transition from a state in box $\I_q$ to a state in $\I_{q'}$. %

\begin{corollary}
\label{cor:2}
For the CM-induced finite state abstraction defined above, $(q,\bsig,q')\in\to$ if and only if
\begin{align}
  \label{eq:35}
\quad &\exists \bd=\{d_l\}_{l\in\Links}\in\Dist, \exists \bx'=\{x'_l\}_{l\in\Links}\in\I_{q'}\text{ such that } \\
\label{eq:35-2}&\forall l\in\Links,\exists \bx\in\cl(\I_q) \text{ s.t. } x'_l=F_l(\xloc_l[t],\sloc[t],d_l[t]).
\end{align}
\end{corollary}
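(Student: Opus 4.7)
The plan is to prove the two directions of the ``if and only if'' separately, leveraging Corollary \ref{cor:1} to replace $\oPost(\cl(\I_q),\bsig)$ by an explicit union of $n_\Dist$ boxes and leveraging Remark \ref{rem:tight} to convert the componentwise inequalities into equalities via an intermediate value argument.

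For the backward direction, I will suppose \eqref{eq:35}--\eqref{eq:35-2} and aim to show $\I_{q'}\cap\oPost(\cl(\I_q),\bsig)\neq\emptyset$. By Assumption 1, $\bd\in\Dist_i$ for some $i\in\{1,\ldots,n_\Dist\}$, so $\ul{d}_l^i\leq d_l\leq \ol{d}_l^i$ for every $l$. For each $l$, the witness $\bx\in\cl(\I_q)$ satisfies the signed componentwise bounds $\delta_{lk}\ul{\xi}_k^l\leq \delta_{lk} x_k\leq \delta_{lk}\ol{\xi}_k^l$ for $k\in\Lloc_l$ by construction \eqref{eq:28}--\eqref{eq:28-2} together with Theorem \ref{thm:mm_traffic}. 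Applying the componentwise monotonicity property (Lemma \ref{lem:reach}) componentwise in $l$ then gives $F_l(\ul{\bxi}^l,\sloc_l,\ul{d}_l^i)\leq x'_l\leq F_l(\ol{\bxi}^l,\sloc_l,\ol{d}_l^i)$, so $\bx'$ lies in the $i$-th box of \eqref{eq:30} and hence in $\oPost(\cl(\I_q),\bsig)$; combined with $\bx'\in\I_{q'}$ this gives $(q,\bsig,q')\in\to$ via \eqref{eq:31}.

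For the forward direction, I will start from $\bx'\in\I_{q'}\cap\oPost(\cl(\I_q),\bsig)$ and, by Corollary \ref{cor:1}, fix an index $i$ with $F_l(\ul{\bxi}^l,\sloc_l,\ul{d}_l^i)\leq x'_l\leq F_l(\ol{\bxi}^l,\sloc_l,\ol{d}_l^i)$ for every $l$. The core step is then to realize each $x'_l$ exactly. Here Remark \ref{rem:tight} is essential: the extreme values are attained at the corner points $(\ul{\bxi}^l,\ul{d}_l^i)$ and $(\ol{\bxi}^l,\ol{d}_l^i)$, both of which belong to the path-connected set $\cl(\I_q)\times[\ul{d}_l^i,\ol{d}_l^i]$. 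Since $F_l$ is (Lipschitz) continuous, as established in the proof of Theorem \ref{thm:mm_traffic}, the intermediate value theorem applied along any path joining these two corners yields $\bx^{(l)}\in\cl(\I_q)$ and $d_l\in[\ul{d}_l^i,\ol{d}_l^i]$ with $x'_l=F_l((\bx^{(l)})^{\text{loc}}_l,\sloc_l,d_l)$. Collecting the $d_l$ over all $l\in\Links$ produces $\bd\in\Dist_i$, which then yields \eqref{eq:35}--\eqref{eq:35-2}.

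The main subtlety I anticipate is that Corollary \ref{cor:1} is stated in terms of the outer box cover $\cup_i\Dist_i$ of $\Dist$, whereas the corollary asserts $\bd\in\Dist$. When $\Dist=\cup_i\Dist_i$ (the natural case in which the boxes are tight) the argument above gives $\bd\in\Dist$ directly; under the weaker inclusion $\Dist\subsetneq\cup_i\Dist_i$ allowed by Assumption 1, the reconstructed $\bd$ is only guaranteed to lie in $\Dist_i$, reflecting the conservatism inherent in covering $\Dist$ by boxes. Apart from this bookkeeping, the proof is essentially a packaging of Corollary \ref{cor:1} and Remark \ref{rem:tight} with an intermediate value argument applied separately to each coordinate $l$, which crucially exploits that the per-coordinate definition of $\oPost$ allows the witness $\bx^{(l)}\in\cl(\I_q)$ to be chosen independently for each $l$.
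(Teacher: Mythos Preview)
Your proposal is correct and matches the paper's approach closely: both directions use Corollary~\ref{cor:1} and Lemma~\ref{lem:reach} for the ``if'' part, and Remark~\ref{rem:tight} together with continuity of $F_l$ for the ``only if'' part. The only cosmetic difference is that the paper phrases the forward direction as a contradiction (if no witness $(\bx,d_l)$ realizes $x'_l$, then either $\sup F_l<x'_l$ or $\inf F_l>x'_l$, contradicting tightness), whereas you invoke the intermediate value theorem directly along a path joining the two corner points; these are the same argument. You are also right to flag the $\Dist$ versus $\cup_i\Dist_i$ bookkeeping---the paper's proof only produces $\bd\in\Dist_{i^*}$ as well, so your caveat is apt rather than a defect in your argument.
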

\begin{proof}
  (if). Suppose \eqref{eq:35}--\eqref{eq:35-2} holds for some $q,q'\in \Q$ and $\bsig\in\Sig$, and let $\bd\in\Dist$ and $\bx'\in\I_{q'}$ be a particular solution such that \eqref{eq:35-2} holds for all $l$. We will show that $\bx'\in\oPost(\cl(\I_q),\bsig)$. Let $i^*$ be such that $\bd\in\Dist_{i^*}$, and let $\ul{\bxi}^l$, $\ol{\bxi}^l$ be as in Corollary \ref{cor:1}. We must have
  \begin{align}
    \label{eq:36}
    F_l(\ul{\bxi}^l,\sloc_l,\ul{d}^{i^*}_l)\leq x'_l\leq     F_l(\ol{\bxi}^l,\sloc_l,\ol{d}^{i^*}_l)
  \end{align}
by Lemma \ref{lem:reach} where we make the same substitutions as in the proof of Corollary \ref{cor:1} because \eqref{eq:22} holds for $\bx$ satisfying \eqref{eq:35-2} for each $l\in\Links$. By \eqref{eq:30}, it follows that $\bx'\in\oPost(\cl(\I_q),\bsig)$, and thus $(q,\bsig,q')\in\to$.

(only if). Suppose $(q,\bsig,q')\in\to$, it follows that $\I_{q'}\cap\oPost(\cl(\I_q),\bsig)\neq \emptyset$, let $\bx'\in \I_{q'}\cap\oPost(\cl(\I_q),\bsig)$ and let $i^*\in\{1,\ldots,n_\Dist\}$ be such that $ F_l(\ul{\bxi}^l,\sloc_l,\ul{d}^{i^*}_l)\leq x'_l \leq F_l(\ol{\bxi}^l,\sloc_l,\ol{d}^{i^*}_l)$ for all $l\in\Links$. Remark~\ref{rem:tight} implies that for each $l$, there exists $\bx\in\cl(\I_q)$ and $d^\dagger_l\in[\ul{d}^{i^*}_l,\ol{d}^{i^*}_l]$ such that $x'_l=F_l(\xloc_l,\sloc_l,d^\dagger_l)$. Indeed, suppose not, then
\begin{align}
  \label{eq:34}
\tilde{x}_l&\triangleq \sup_{x\in\cl(\I_q),d_l\in[\ul{d}^{i^*}_l,\ol{d}^{i^*}_l]}F_l(\xloc_l,\sloc_l,d_l)<x_l',\quad \text{or}\\
  \label{eq:34-2}\undertilde{x}\! \hspace{2pt} _l&\triangleq \inf_{x\in\cl(\I_q),d_l\in[\ul{d}^{i^*}_l,\ol{d}^{i^*}_l]}F_l(\xloc_l,\sloc_l,d_l)>x_l'.
\end{align}
If \eqref{eq:34} holds, {then $F_l(\bx,\bsig,\bd)\leq \tilde{x}_l<x_l'\leq F_l(\ol{\bxi}^l,\sloc_l,\ol{d}^{i^*}_l)$ for all $\ul{\bx}_q\leq \bx\leq \ol{\bx}_q$ and all $\ul{\bd}^{i^*}\leq \bd \leq \ol{\bd}^{i^*}$, which implies the upper bound in \eqref{eq:22} is not achieved, contradicting the first statement of Remark \ref{rem:tight}.}
A symmetric argument shows that if \eqref{eq:34-2} holds, then Remark \ref{rem:tight} is again contradicted. Defining $\bd=\{d^\dagger_l\}_{l\in\Links}$ for the particular collection $\{d^\dagger_l\}_{l\in\Links}$ above implies that \eqref{eq:35}--\eqref{eq:35-2} holds, completing the proof.
\end{proof}
{We remark that, in \eqref{eq:35-2}, the same choice of $x\in\cl(\I_q)$ will generally not work for all $l\in\Links$ due to the over-approximation of the reachable set; see \cite{Coogan:2014bh} for further discussion.}

\subsection{Constructing The Transition System $\T$}
\label{sec:constructing-t}

\begin{figure}
  \centering
  \algnewcommand\algorithmicinput{\textbf{Input:}}
\algnewcommand\INPUT{\item[\algorithmicinput]}
\algnewcommand\algorithmicinitial{\textbf{initialize:}}
\algnewcommand\INIT{\item[\algorithmicinitial]}
\algnewcommand\algorithmicoutput{\textbf{Output:}}
\algnewcommand\OUTPUT{\item[\algorithmicoutput]}
\begin{minipage}{\linewidth}
\begin{algorithmic}[1]
\algblockdefx{InputS}{EndInputS}{\textbf{inputs: }}{}
\algtext*{EndInputS}
\Function{Abstraction}{{\it network model}, $\Dist$, $\{\I_q\}_{q\in\Q}$} \textbf{returns} $\T$
\InputS {\it network model}, a traffic network model with
\Statex \pushcode update functions $\{F_l\}_{l\in\Links}$ with domain $\Domain$ 
\Statex \pushcode and signal input set $\Sig$
\State \pushcodeb $\Dist$, the disturbance set $\Dist=\cup_{i=1}^{n_\Dist} \Dist^i$
\State \pushcodeb $\{\I_q\}_{q\in\Q}$, a box partition $\Domain$
\EndInputS 
\State $\to:=\emptyset$
\For{\textbf{each } $\bsig\in\Sig$}
\For{\textbf{each }$q\in\Q$}
\For{$i:=1$ \textbf{to} $n_\Dist$}
\State $\ul{\bxi}^l:=$ as in \eqref{eq:28}
\State $\ol{\bxi}^l:=$ as in \eqref{eq:28-2}
\State $\ul{\by}:= F_l(\ul{\bxi}^l,\sloc_l,\ul{d}^i_l)$
\State $\ol{\by}:=F_l(\ol{\bxi}^l,\sloc_l,\ol{d}^i_l)$
\State $\Q':=\text{\sc{Successors}}({\ul{\by}},{\ol{\by}},\{\I_q\}_{q\in\Q})$
\State $\to:=\to\cup (q\times\bsig\times\Q')$
\EndFor
\EndFor
\EndFor
\State \Return $\T:=(\Q,\Sig,\to)$%
\EndFunction
\end{algorithmic}
\end{minipage}
  \caption{Algorithm for computing a finite state abstraction of the traffic dynamics. The algorithm requires function {\sc Successors}, which can be implemented using different algorithms, depending on the structure of the box partition.}
\label{fig:algo}
\end{figure}

We begin with the primary algorithm for calculating $\T$ shown in Fig. \ref{fig:algo}, which relies on Corollary \ref{cor:1} to compute $\oPost$ and to construct the finite state abstraction as defined in Definition \ref{def:aqts}. This algorithm requires a function called \textsc{Successors} that takes the lower and upper corners of a box $Y$ as input, as well as a box partition of $\Domain$, and returns the indices of the box partitions which intersects $Y$. We first present a generic algorithm for \textsc{Successors} applicable to any box partition. To this end, consider the nonempty box $\I_q=\{\bx\mid \ul{\bx}\prec^1_q\bx\prec^2_q\ol{\bx}\}$ and let $Y\triangleq \{\bx\mid \ul{\by}\leq \bx \leq \ol{\by}\}$. It is straightforward to show that $\I_q\cap Y\neq \emptyset$ if and only if $\ul{\bx}\prec^1_q \ol{\by}\text{ and }\ul{\by}\prec^2_{q} \ol{\bx}$.

The algorithm in Fig. \ref{fig:univ} utilizes this fact to compute $\Q'$, the indices of the partitions that intersect a box defined by the corners $\ul{\by}$ and $\ol{\by}$. The algorithm is convenient because it works for any box partition of $\Domain$, however it requires comparing the corners $\ul{\by}$, $\ol{\by}$ to the corners of each box $\I_q$, $q\in\Q$. Thus, computing $\T$ scales quadratically with $|\Q|$ since we must determine if $\oPost(\bsig,\I_q)$ intersects each box $\I_{q'}$, $q'\in\Q$ for each $q\in\Q$.

However, the general algorithm in Fig. \ref{fig:univ} fails to take into account any structure in the partition itself. For example, for gridded box partitions, we can identify $\Q'$ by comparing the corners $\ul{\by}$, $\ol{\by}$ componentwise to the partition's constituent coordinate intervals. For simplicity of presentation, we consider gridded box partitions $\{\I_q\}_{q\in\Q}$  where, for each $l\in\Links$, there exists a set of intervals $\{I_1^l,\ldots,I^l_{N_l}\}$ of the form
\begin{align}
  \label{eq:43}
  I_1^l=[\eta^l_0,\eta^l_1], \quad I_j^l=(\eta^l_{j-1},\eta^l_{j}], \ j=2,\ldots,N_l
\end{align}
for $0=\eta^l_0\leq \eta^l_1 <\eta^l_2<\ldots<\eta^l_{N_l-1}<\eta^l_{N_l}=\denscap_l$ such that $\I_q=\prod_{l\in\Links} I_{q_l}^l$ for $q=\{q_l\}_{l\in\Links}\in \Q\cong\prod_{l\in\Links}\{1,\ldots,N_l\}$. 
Define
\begin{align}
  \label{eq:44-2} \ol{j}_l&=\begin{cases}
1&\text{if }\ol{y}_l=0\\
\displaystyle \max_{j\in\{1,\ldots,N_l\}} j \text{ s.t. }\eta^l_{j-1}< \ol{y}_l&\text{else}
\end{cases}\\
  \label{eq:44} 
\ul{j}_l&= \min_{j\in\{1,\ldots,N_l\}} j \text{ s.t. }\ul{y}_l\leq \eta^l_j
\end{align}
and let $\Q'=\{\{q_l\}_{l\in\Links}\mid q_l\in\{\ul{j}_l,\ul{j}_l+1,\ldots,\ol{j}_l\}\}$. Then $\I_q\cap Y\neq \emptyset$ if and only if $q'\in\Q'$.
Thus, to determine the partitions $\Q'$ that intersect a given box $Y$, we simply identify the indices of the intervals that intersects $Y$ along each dimension. Finding $\ul{j}_l$ and $\ol{j}_l$ can be done in $O(N_l)$ time for each $l$, thus solving for $\Q'$ requires $O(|\Links|\max_{l\in\Links} \{N_l\})$ time. Thus, for gridded box partitions, we can instead use the implementation of \textsc{Successors} found in Fig. \ref{fig:euc}.

\begin{figure}
  \centering
  \begin{minipage}{1.0\linewidth}
    \begin{algorithmic}[1]
      \algblockdefx{InputS}{EndInputS}{\textbf{inputs: }}{}
      \algtext*{EndInputS}
      \Function{Successors}{$\ul{\by}$, $\ol{\by}$, $\{\I_q\}_{q\in\Q}$} \textbf{returns} $\Q'$
      \InputS $\ul{\by}$ and $\ol{\by}$, points in domain $\mathcal{X}$
      \State \pushcodeb $\{\I_q\}_{q\in\Q}$, an interval partition of $\Domain$
      \EndInputS
      \State \textbf{initialize: }$\Q'=\emptyset$
      \For{\textbf{each }$q'\in\Q$}
      \If{ ($\ul{\bx}_q \prec^1_q \ol{\by}$)$\land$($\ul{\by}\prec^2_q \ol{\bx}_q$)}
      \State $\Q':=\Q'\cup\{q'\}$
      \EndIf
      \EndFor
      \State \textbf{return} $\Q'$
      \EndFunction
    \end{algorithmic}
  \end{minipage}
  \caption{A generic algorithm for overapproximating successor states applicable to any box partition. The algorithm returns $\Q'$, the set of indices of boxes that intersect the box defined by the corners $\ul{\by}$, $\ol{\by}$, that is, $q'\in\Q'$ if and only if $\I_{q'}\cap \{\bx\in\Domain\mid \ul{\by}\leq \bx \leq \ol{\by}\}\neq \emptyset$.}
\label{fig:univ}
\end{figure}

\begin{figure}
  \centering
  \begin{minipage}{1.0\linewidth}
    \begin{algorithmic}[1]
      \algblockdefx{InputS}{EndInputS}{\textbf{inputs: }}{}
      \algblockdefx[NAME]{IfS}{EndIfS} [2]{\textbf{if} #1 \textbf{then} #2 \textbf{else}}{}
      \algtext*{EndInputS}
      \algtext*{EndIfS}
      \Function{Successors}{$\ul{\by}$, $\ol{\by}$, $\{\I_q\}_{q\in Q}$} \textbf{returns} $\Q'$
      \InputS $\ul{\by}=\{\ul{y}_l\}_{l\in\Links}$ and $\ol{\by}=\{\ol{y}_l\}_{l\in\Links}$,
      \Statex \pushcode points in domain $\mathcal{X}$
      \State \pushcodeb $\Q$, a grid interval partition of $\Domain$
      \EndInputS
      \For{\textbf{each} $l\in\Links$}
      \State $\ol{j}_l:=$ as in \eqref{eq:44-2}
      \State $\ul{j}_l:=$ as in \eqref{eq:44}
      \EndFor
      \State \textbf{return }$\Q':=\left\{(j_l)_{l\in\Links}\mid j_l\in\{\ub{j}_l,\ldots,\bar{j}_l\}\ \forall l\in\Links\right\}$
      \EndFunction
    \end{algorithmic}
  \end{minipage}
  \caption{An algorithm for identifying successor states when $\Q$ is a gridded box partition.}
\label{fig:euc}
\end{figure}

The algorithm in Fig. \ref{fig:euc} may be applied to nongridded box partitions with some modification. In particular, a nongridded box partition $\{\I_q\}_{q\in Q}$ can be \emph{refined} to obtain the coarsest possible gridded box partition with the property that each box $\I_q$ is the union of boxes from the refinement. This refinement is used as an index set; to compute the possible transitions from $\I_q$ for $q\in\Q$ under signaling $\bsig\in\Sig$, we compute $\ul{\by}$ and $\ol{\by}$ as in lines 11 and 12 of the algorithm in Fig. \ref{fig:algo}, and then use the refinement along with the algorithm in Fig. \ref{fig:euc} to determine $\Q'$, the set of intersected boxes. The refinement does not introduce additional states in the transition system or require addition reach computations; it is only used to efficiently determine $\Q'$. For example, the coarsest refinement of Fig. \ref{fig:box}(b) partitions the box labeled $q_5$ into four boxes, which are all labeled $q_5$.
This method will be faster if the total number of intervals in the refinement is less than $|\Q|$.

\subsection{Augmenting the State Space with Signaling}
To capture control objectives that include the state of the signals themselves (which are modeled as inputs in the finite state abstraction $\T$), we augment the discrete state space. Examples of specifications that require this augmention include $\ltlphi_2$ and $\ltlphi_3$ above or the specifications ``the state of an intersection cannot change more than once per $n^\text{min}$ time steps'' or ``an input signal cannot remain unchanged for $n^\text{max}$ time steps.'' In particular, we propose augmenting the finite state abstraction to encompass both the current state of the finite state abstraction and the current state of the traffic signals.

\begin{definition}[Augmented finite state abstraction] 
\label{def:aug}
The \emph{augmented finite state abstraction} of the traffic network is the transition system $\Taug=(\QQ,\Sig,\toaug)$ where

  \begin{itemize}[leftmargin=10pt]
  \item $\QQ=\Q\times \Sig$ is the set of discrete states consisting of the box partition index set and the set of allowed input signals,
\item $\Sig$ is the set of allowed input signals,
\item $\toaug\subseteq \QQ\times\Sig\times \QQ$ is the set of transitions given by $((q,\bsigma),\bsig,(q',\bsigma'))\in\toaug$ for $(q,\bsigma), (q',\bsigma')\in\QQ$ if and only if $(q,\bsig,q')\in\to$ and $\bsigma'=\bsig$.
  \end{itemize}

\end{definition}

\section{Synthesizing Controllers from LTL Specifications}
\label{sec:synth-contr-from}

\subsection{Synthesis Summary}
\label{sec:synthesis-summary}
We omit the details of how a control strategy is synthesized from the nondeterministic transition system $\Taug$ for a given LTL control objective, as this is well-documented in the literature, see \emph{e.g.} \cite{Yordanov:2012fk, horn2005streett}. Instead, we summarize the main steps of this synthesis as follows: from the LTL control objective, we obtain a deterministic Rabin automaton that accepts all and only trajectories that satisfy the LTL specification using off-the-shelf software. %
 We then construct the synchronous product of the Rabin automaton and $\Taug$ in Definition \ref{def:aug}, resulting in a nondeterministic product Rabin automaton from which a control strategy is found by solving a Rabin game \cite{horn2005streett}. The result is a control strategy for which trajectories of the traffic network are guaranteed to satisfy the LTL specification.

As the discrete state space is finite, the signaling control strategy takes the form of a collection of ``lookup'' tables over the discrete states of the system, $\QQ$, and there is one such table for each state in the Rabin automaton. Thus, implementing the control strategy requires implementing the underlying deterministic transition system of the specification Rabin automaton, which is interpreted as a finite memory controller that ``tracks'' progress of the LTL specification and updates at each time step.  Given the current state of the Rabin transition system, the controller chooses the signaling input dictated by the current state of the augmented system $\QQ$. {Thus, we obtain a state feedback, finite memory controller. Additionally, the controller update only requires knowledge of the currently occupied {partition} of $\Q$, and thus does not require precise knowledge of the state $\bx$.}

\subsection{Computational Requirements}
\label{sec:comp-requ}

For each $q\in\Q$ and each $\bsig\in\Sig$, determining the set $\{q'\mid q\overset{\bsig}{\to}q'\}$ requires first computing $\oPost(\I_q,\bsig)$, which requires computing $F_l(\cdot)$ at $2n_\Dist$ points for each $l\in\Links$. Since $F_l(\cdot,\bsig,\cdot)$ is only a function of the links in $\Lloc$, each computation of this function requires time $O(1)$ assuming the average number of links at {an intersection} does not change with network size. Thus $\oPost(\I_q,\bsig)$ is computed in time $O(|\Links|n_\Dist)$. Then, we identify the set $\Q'$ of boxes that intersect $\oPost(\I_q,\bsig)$. As described in Section \ref{sec:constructing-t}, this requires $2|\Q|$ comparisons of vectors of length $|\Links|$ and thus is done in time $O(|\Q||\Links|)$ via the algorithm in Fig. \ref{fig:univ}. However, for gridded box partitions, $\Q'$ is computed in time $O(|\Links|\max_{l\in\Links}\{N_l\})$ by the algorithm in Fig. \ref{fig:euc}. Even for nongridded box partitions, $\Q'$ can be computed in time $O(|\Links|\max_{l\in\Links}\{N_l\})$ where $N_l$ is interpreted as the number of intervals of link $l$ resulting from the coarsest refinement of the box partition that results in a gridded box partition.  For a gridded partition, $|\Q|=\prod_{l\in\Links}N_l$ and thus the number of boxes grows exponentially with the number of links in the network. For a nongridded box partition, the number of partitions can be substantially lower. %
Since $\{q'\mid q\overset{\bsig}{\to}q'\}$ must be computed for each $q$ and $\bsig$, constructing $\T$ requires time $O(|\Q|^2|\Sig||\Links|^2n_\Dist)$ when using the algorithm in Fig. \ref{fig:univ} or time $O(|\Q||\Sig|\max_{l\in\Links}\{N_l\}|\Links|^2n_\Dist)$ for the algorithm in Fig.~\ref{fig:euc}. 

We briefly compare these computational requirements to that of polyhedral methods such as those in \cite{Yordanov:2012fk}. As the dynamics in \eqref{eq:5}--\eqref{eq:7-2} are piecewise affine, such methods can in principle be applied here. Computing $\Post(\I_q,\bsig)$ requires polyhedral affine transformations and polyhedral geometric sums, operations that scale exponentially in $|\Links|$ \cite{Kurzhanskiy:2010hc, Herceg:2013db}. To determine if $\Post(\I_q,\bsig)$ intersects another polytope, geometric differences are required, which again scales exponentially with $|\Links|$. %

\section{Case Study}
\label{sec:case-study}
\begin{figure}[t]
  \centering
  \begin{tikzpicture}
    [subsystem/.style={quad with diagonal fill, ns color=red, ew color=green, text =black, draw,inner sep=1pt,minimum size=4mm},
     nostate/.style={gray!50!white},
>=to,line width=1pt, >=latex, scale=1]
\node[subsystem,label={[label distance=-.2cm]225:$v_1$}] (node1) at (0,0) {};
\node[subsystem,label={[label distance=-.2cm]225:$v_2$}] (node2) at (1.5,0) {};
\node[subsystem,label={[label distance=-.2cm]225:$v_3$}] (node3) at (3,0) {};
\node[subsystem,label={[label distance=-.2cm]225:$v_4$}] (node4) at (4.5,0) {};
\node (A) at (-1.5,0) {};
\node (B) at (6,0) {};
\draw[->, line width=2pt] (A) -- node[above,pos=.4]{$1$}(node1);
\draw[->, line width=2pt] (node1) -- node[above,pos=.4]{$2$} (node2);
\draw[->, line width=2pt] (node2) -- node[above,pos=.4]{$3$} (node3);
\draw[->, line width=2pt] (node3) -- node[above,pos=.35]{$4$} (node4);
\draw[->, nostate, line width=2pt] (node4) -- (B);
\draw[<-] (node1.110) -- node[left,pos=.6]{$6$} +(0,.5);
\draw[->,nostate] (node1.70) -- +(0,.5);
\draw[<-] (node1.290) -- node[right,pos=.6]{$5$} +(0,-.5);
\draw[->,nostate] (node1.250) -- +(0,-.5);
\draw[<-] (node4.110) -- node[left,pos=.6]{$10$} +(0,.5);
\draw[->,nostate] (node4.70) -- +(0,.5);
\draw[<-] (node4.290) -- node[right,pos=.6]{$9$} +(0,-.5);
\draw[->,nostate] (node4.250) -- +(0,-.5);
\draw[<-] (node2.90) -- node[left,pos=.6]{$7$} +(0,.5);
\draw[->,nostate] (node2.270) -- +(0,-.5);
\draw[->, nostate] (node3.90) -- +(0,.5);
\draw[<-] (node3.270) -- node[right,pos=.6]{$8$} +(0,-.5);
\end{tikzpicture}
  \caption{Signalized network consisting of a major corridor road (links 1, 2, 3, and 4) which intersects minor cross streets (links 5, 6, 7, 8, 9, and 10). The gray links are not explicitly modeled.}
  \label{fig:ex2}
\end{figure}

 We consider the example network in Fig. \ref{fig:ex2} which consists of a main corridor (links 1, 2, 3, and 4) with intersecting cross streets (links 5, 6, 7, 8, 9, and 10) and four intersections, a commonly encountered network configuration. The gray links exit the network and are not explicitly modeled. The network parameters are
$(\denscap_1,\ldots,\denscap_{10})=(40,50,50, 50, 40, 40, 40,40,40,40)$,
$(c_1,\ldots,c_{10})=(20,20,20, 20, 10, 10, 10,10,10,10)$,
$\beta_{12}=\beta_{23}=\beta_{34}=\beta_{62}=\beta_{52}=0.5$, 
$\beta_{73}=\beta_{84}=0.9$, $\alpha^{\{1\}}_{62}=\alpha^{\{1\}}_{52}=0.5$, and all other supply ratios are one, where the time step is 15 seconds. We assume
 \begin{align}
\notag   \Dist=&\{\bd\mid \mathbf{0}\leq \bd \leq [10\ 0 \ 0 \ 0 \ 10 \ 10 \ 0 \ 0 \ 10 \ 10]\}\\
   \label{eq:45}&\cup  \{\bd\mid \mathbf{0}\leq \bd \leq [10\ 0 \ 0 \ 0 \ 10 \ 10 \ 10 \ 10 \ 0 \ 0]\}.
 \end{align}
We further assume the available signals are $\Sig_{v_1}=\{\{1\},\{5, 6\}\}$, $\Sig_{v_2}=\{\{2\},\{7\}\}$, $\Sig_{v_3}=\{\{3\},\{8\}\}$, and $\Sig_{v_4}=\{\{4\},\{9, 10\}\}$. 
We wish to find a control policy for the four signalized intersections that satisfies the LTL property $\ltlphi=\ltlphi_1\And \ltlphi_2 \And \ltlphi_3\And \ltlphi_4$
where
\begin{flalign}
\notag\ltlphi_1=&\Box\Diam(\bsig_{v_1}=\{5, 6\}) \And \Box\Diam(\bsig_{v_2}=\{7\})\\
  \label{eq:46}&\hspace{.5in} \And\Box\Diam(\bsig_{v_3}=\{8\}) \And \Box\Diam(\bsig_{v_4}=\{9,10\})\hspace{-.1in}\\
\notag&\hspace{-.1in}\text{``Each signal actuates cross street traffic infinitely often''}\\
\label{eq:46-2}\ltlphi_2=&\Diam \Box \big((x_1\leq 30) \And (x_2\leq 30) \And (x_3\leq 30) \And (x_4\leq 30)\big)\hspace{-10pt}\\
\notag&\text{``Eventually, links 1, 2, 3, and 4 have fewer than 30}\\[-.3em]\notag&\text{vehicles on each link and this remains true for all time''}\\
\label{eq:46-3}\ltlphi_3=&\Box\big(\neg(\bsig_{v_4}=\{4\})\And \Next (\bsig_{v_4}=\{4\}) \Impl \Next\Next  (\bsig_{v_4}=\{4\}) \big)\hspace{-10pt}\\
\notag \ltlphi_4=&\Box\big(\neg(\bsig_{v_4}=\{9,10\})\And \Next (\bsig_{v_4}=\{9,10\})\\
  \label{eq:46-4}&\hspace{1.3in}\Impl \Next\Next  (\bsig_{v_4}=\{9,10\}) \big)\hspace{-3pt}\\
\notag&\text{For $\ltlphi_3$ (resp. $\ltlphi_4$), ``The signal at {intersection} $v_4$ must }\\[-.3em]\notag&\text{actuate corridor traffic (resp. cross street traffic) for at}\\[-.3em]\notag&\text{least two sequential time-steps.''}
\end{flalign}

\begin{figure}
  \centering
\begin{tabular}{c c}
  \includegraphics[width=2.55in]{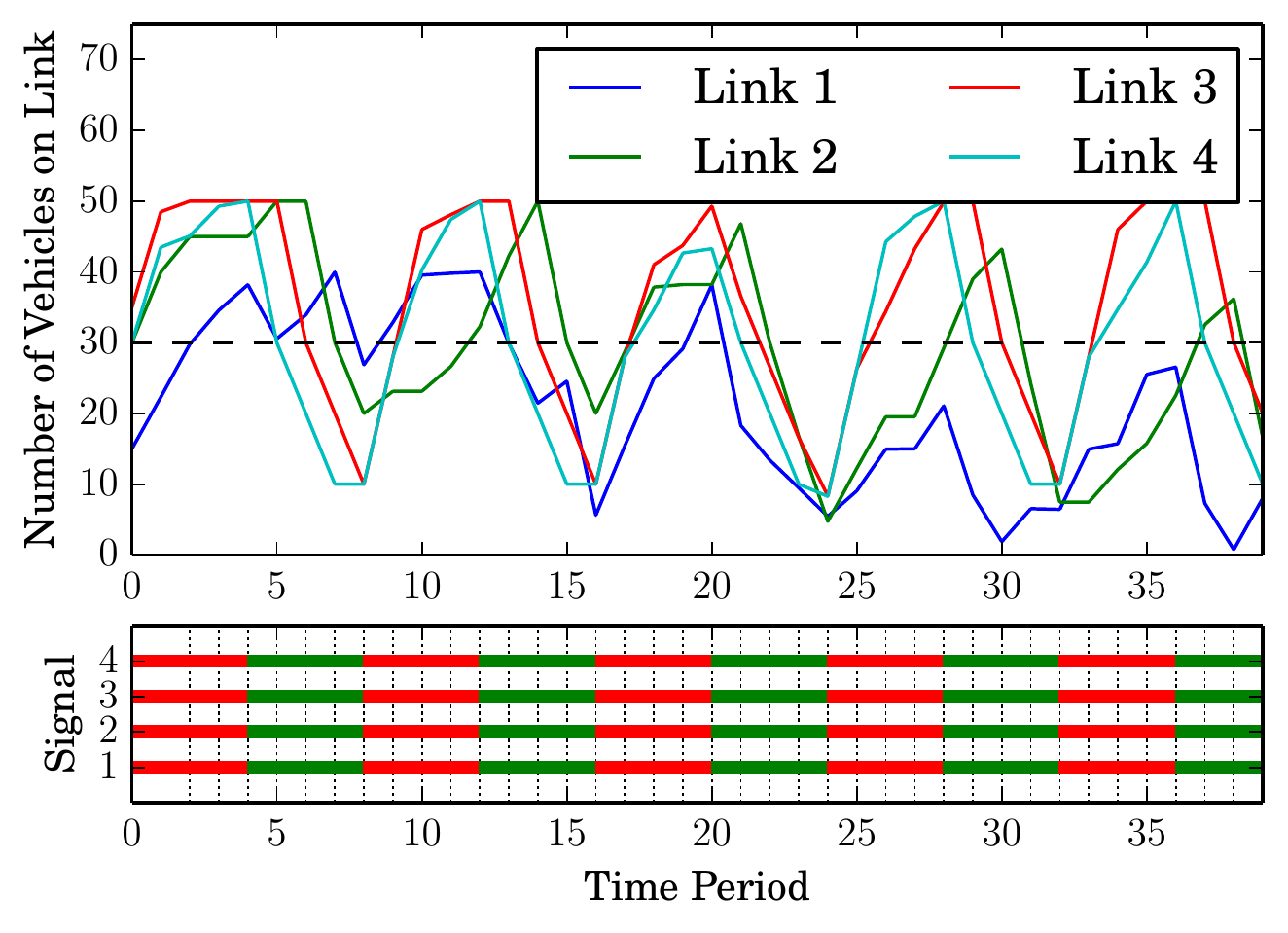}\\
(a)\\
  \includegraphics[width=2.55in]{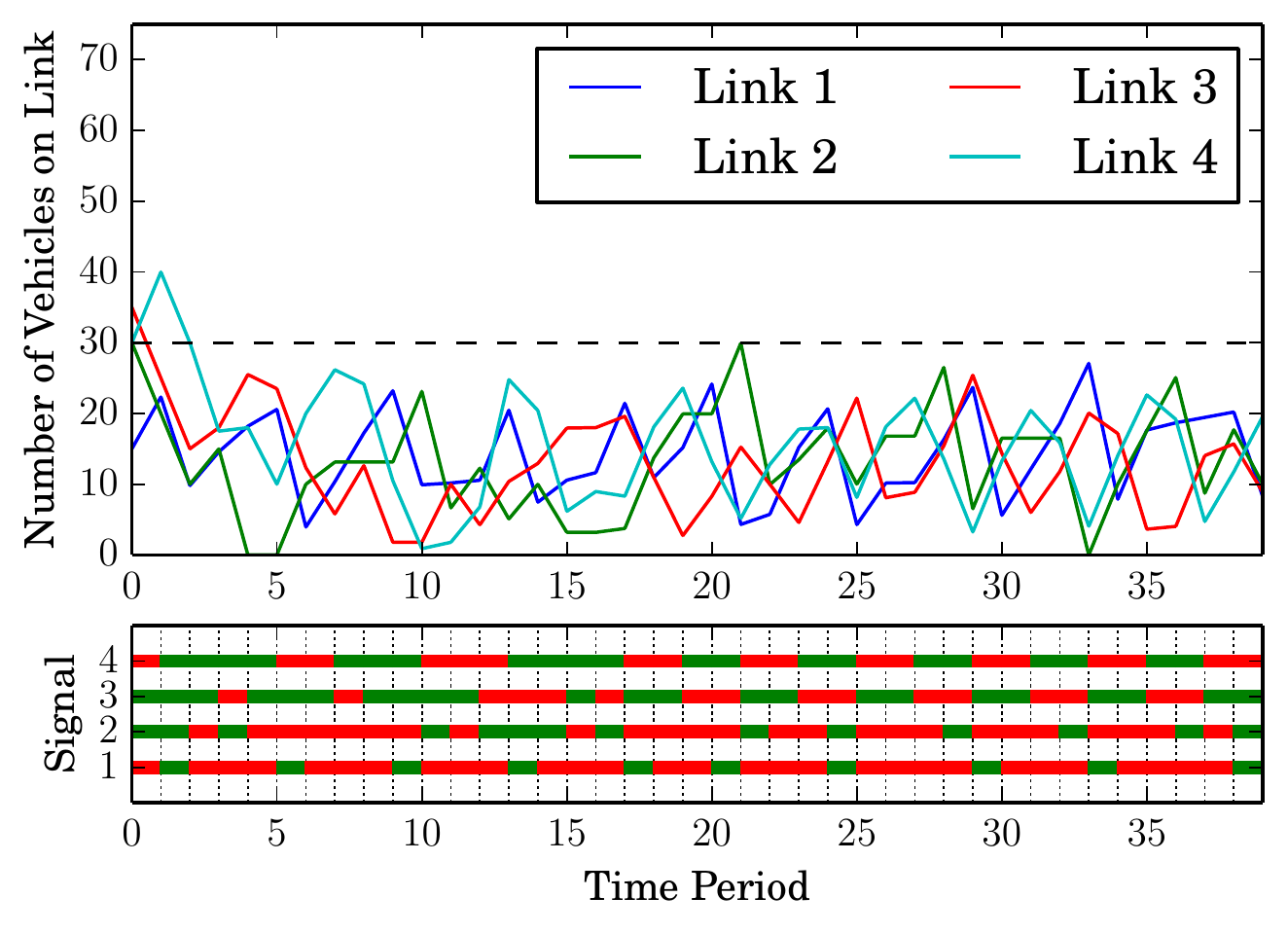}\\
(b)
\end{tabular}
\caption{(a) A sample trajectory of a na\"{i}ve strategy that alternately actuates corridor traffic and then cross street traffic for four time steps each in a synchronized fashion. This policy does not satisfy the desired control objective, in particular, \eqref{eq:46-2} is not satisfied. (b) A sample trajectory resulting from the synthesized control policy that is guaranteed to satisfy the LTL policy \eqref{eq:46}--\eqref{eq:46-4}. In the lower plots of (a) and (b), green (resp., red) for the signal trace indicates corridor traffic (resp., cross street traffic) is actuated.}
  \label{fig:trace}
\end{figure}

Thus, $\ltlphi_2$ reflects our preference for actuating corridor traffic and ensures that eventually, links 2, 3, and 4 have ``adequate supply'' because if the number of vehicles on these links is less than 30, then these links can always accept upstream demand, thus avoiding \emph{congestion} (congestion occurs when demand is greater than supply). Condition $\ltlphi_1$ ensures that, despite the preference for facilitating traffic along the corridor, we must infinitely often actuate traffic at the cross streets. Conditions $\ltlphi_3$ and $\ltlphi_4$ are needed if, \emph{e.g.} there exists crosswalks at {intersection} $v_4$ and a minimum amount of time is required to allow pedestrians to cross. {Note that repeated application of the $\Next$ (``next'') operator allows us to consider finite time horizons as in \eqref{eq:46-3} and \eqref{eq:46-4}.}

We partition the state space into 408 boxes that favors larger boxes when there are fewer total vehicles in the network. There are 16 signaling inputs, and thus, the number of states in the transition system $\Taug$ is $|\QQ|=6528$. The Rabin automaton generated from $\ltlphi$ contains 62 states and one acceptance pair. Computing the finite state abstraction $\T$ took 22.4 seconds. In contrast, the computation would be intractable using polyhedral methods. Computing the product automaton took 30.9 minutes and computing the control strategy took 15.5 minutes on a Macbook Pro with a 2.3 GHz processor where we use the Rabin game solver in \texttt{conPAS2} \cite{Yordanov:2012fk}, however \texttt{conPAS2} is written in MATLAB and the synthesis process is likely to be much more efficient if implemented in C or C++ and optimized. Furthermore, all computations can be performed offline and some are parallelizable, such as computing the product automaton. Finally, we note that the computed control strategy is implemented with minimal online costs.

 Fig. \ref{fig:trace}(a) shows a sample trajectory of the network using a na\"{i}ve coordinated signaling strategy whereby each intersection actuates corridor traffic for three time steps and then cross traffic for three time steps. The exogenous disturbance is generated uniformly randomly from $\Dist$. The trajectories are not guaranteed to satisfy the control objective, in particular, $\ltlphi_2$ is violated. Fig. \ref{fig:trace}(b) shows a sample trajectory of the system with a control strategy synthesized using the finite state abstraction augmented with signal history and the LTL requirement above.   The control strategy is correct-by-construction and thus guaranteed to satisfy $\ltlphi$ from any initial state.

We see that the synthesized controller reacts to increased vehicles on the corridor by actuating the corridor links, thereby preventing congestion (inadequate supply) along the corridor. At the same time, the controller actuates cross streets when doing so does not adversely affect conditions on the corridor (\emph{i.e.}, cause congestion). In contrast, the fixed time controller in Fig \ref{fig:trace}(a) is not able to react to the current conditions of the network and fails to prevent congestion along the corridor; in fact, links 2, 3, and 4 periodically reach full capacity.

\section{Conclusions}
\label{sec:conclusions}
We have proposed a framework for synthesizing a control strategy for a traffic network that ensures the resulting traffic dynamics satisfy a control objective expressed {in linear temporal logic (LTL)}. %
In addition to offering a novel domain for applying formal methods tools in a control theory setting, we have identified and exploited key properties of traffic networks to allow efficient computation of a finite state abstraction.%

Future research will investigate systematic methods for determining an appropriate box partition to further reduce the number of states in the computed abstraction. Additionally, traffic networks are often composed of tightly coupled neighborhoods and towns connected by sparse longer roads, and such networks may be amenable to a compositional formal methods approach using an assume-guarantee framework \cite{clarke1999model}.%

\bibliographystyle{ieeetr}
\bibliography{newbib}

\begin{IEEEbiography}[{\includegraphics[width=1in,height=1.25in,clip,keepaspectratio]{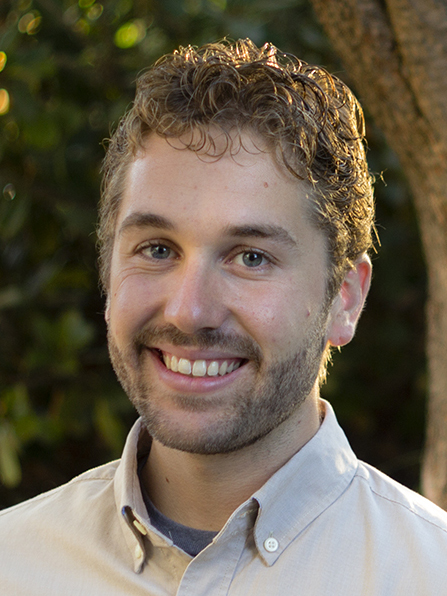}}]{Samuel Coogan}
is a Ph.D. candidate in Electrical Engineering and Computer Sciences at the University of California, Berkeley. He received his B.S. in Electrical Engineering from Georgia Tech in 2010 and his M.S. in Electrical Engineering from UC Berkeley in 2012. His research interests are in control theory, nonlinear and hybrid systems, and formal methods. He is particularly interested in applying techniques from these domains to the control and design of transportation systems. He received an NSF Graduate Research Fellowship in 2010 and the Leon O. Chua Award for outstanding achievement in nonlinear science from UC Berkeley in 2014.
\end{IEEEbiography}

\begin{IEEEbiography}[{\includegraphics[width=1in,height=1.25in,clip,keepaspectratio]{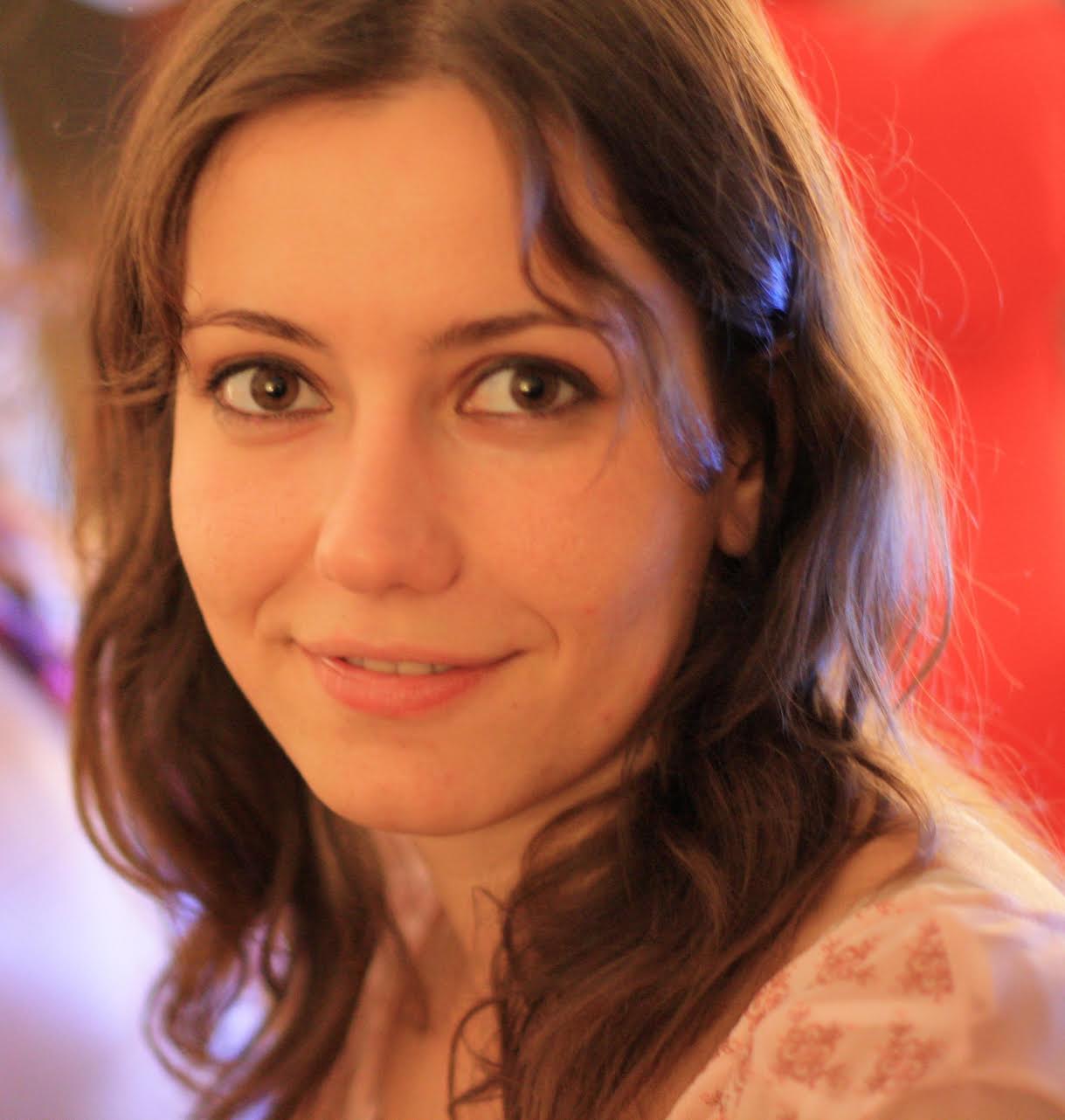}}]{Ebru Aydin Gol}
received her B.Sc. degree in computer engineering from Orta Dogu Teknik Universitesi, Ankara, Turkey, in 2008, M.Sc. degree in computer science from Ecole Polytechnique Federale de Lausanne, Lausanne, Switzerland, in 2010 and Ph.D. degree in systems engineering from Boston University, Boston, MA, USA in 2014. She has been a Site Reliability Engineer at Google since 2014. Her research interests include verification and control of dynamical systems, optimal control, and synthetic biology.
\end{IEEEbiography}

\begin{IEEEbiography}[{\includegraphics[width=1in,height=1.25in,clip,keepaspectratio]{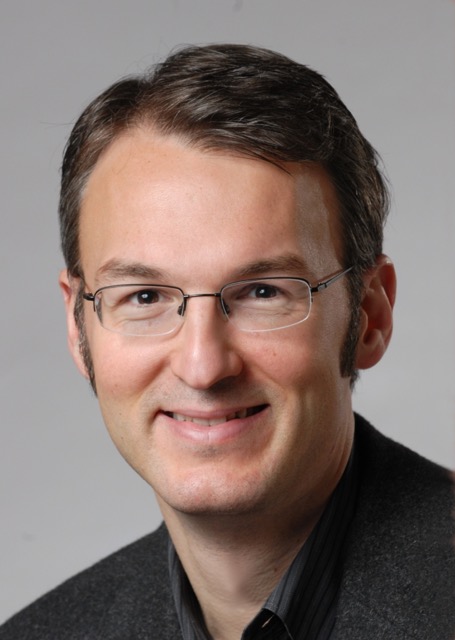}}]{Murat Arcak}
is a professor at U.C. Berkeley in the Electrical Engineering and Computer Sciences Department.  He received the B.S. degree in Electrical Engineering from the Bogazici University, Istanbul, Turkey (1996) and the M.S. and Ph.D. degrees from the University of California, Santa Barbara (1997 and 2000). His research is in dynamical systems and control theory with applications to synthetic biology, multi-agent systems, and transportation. Prior to joining Berkeley in 2008, he was a faculty member at the Rensselaer Polytechnic Institute. He received a CAREER Award from the National Science Foundation in 2003, the Donald P. Eckman Award from the American Automatic Control Council in 2006, the Control and Systems Theory Prize from the Society for Industrial and Applied Mathematics (SIAM) in 2007, and the Antonio Ruberti Young Researcher Prize from the IEEE Control Systems Society in 2014. He is a member of SIAM and a fellow of IEEE.
\end{IEEEbiography}

\begin{IEEEbiography}[{\includegraphics[width=1in,height=1.25in,clip,keepaspectratio]{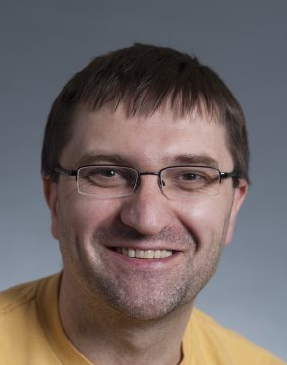}}]{Calin Belta}
is a Professor in the Department of Mechanical Engineering, Department of Electrical and Computer Engineering, and the Division of Systems Engineering at Boston University, where he is also affiliated with the Center for Information and Systems Engineering (CISE) and the Bioinformatics Program. His research focuses on dynamics and control theory, with particular emphasis on hybrid and cyber-physical systems, formal synthesis and verification, and applications in robotics and systems biology. Calin Belta is a Senior Member of the IEEE and an Associate Editor for the SIAM Journal on Control and Optimization (SICON) and the IEEE Transactions on Automatic Control. He received the Air Force Office of Scientific Research Young Investigator Award and the National Science Foundation CAREER Award. 
\end{IEEEbiography}

\end{document}